\title{Overlay Network Construction: \\ Improved Overall and Node-Wise Message Complexity}
\author{Yi-Jun Chang\footnote{National University of Singapore. Email: cyijun@nus.edu.sg} \and Yanyu Chen\footnote{National University of Singapore. Email: yanyu.chen@u.nus.edu} \and Gopinath Mishra\footnote{National University of Singapore. Email: gopinath@nus.edu.sg}}
\date{}
\newcommand{\yanyu}[1]{\todo[backgroundcolor=red!25]{Yanyu: #1}}
\newtheorem{theorem}{Theorem}[section]
\newtheorem{lemma}[theorem]{Lemma}
\newtheorem{observation}[theorem]{Observation}
\newtheorem{corollary}[theorem]{Corollary}
\newcommand{\EE}{\mathbb{E}}
\newcommand{\floor}[1]{\left \lfloor #1 \right \rfloor}
\newcommand{\polylog}{\operatorname{polylog}}
\newcommand{\vol}{\operatorname{vol}}
\newcommand{\id}{\operatorname{id}}
\newcommand{\PCONGEST}{\textnormal{\textsf{P2P}-\textsf{CONGEST}}}
\newcommand{\GOSSIP}{\textnormal{\textsf{GOSSIP}}}
\newcommand{\GOSSIPr}{\textnormal{\textsf{GOSSIP-reply}}}
\newcommand{\GOSSIPrOne}{$\GOSSIPr\left(\log n\right)$}
\newcommand{\NCC}{\textnormal{\textsf{NCC}}}
\newcommand{\HYBRID}{\textnormal{\textsf{HYBRID}}}
\newcommand{\CONGEST}{\textnormal{\textsf{CONGEST}}}
\newcommand{\LOCAL}{\textnormal{\textsf{LOCAL}}}
\newcommand{\WFT}{\textnormal{WFT}}
\newcommand{\CExp}{\textnormal{\textsc{CreateExpander}}}
\newcommand{\ExpDR}{\textnormal{\textsc{ExpanderDegreeReduction}}}
\newcommand{\hybridWFT}{\textnormal{\textsc{HybridWFT}}}
\newcommand{\MergeStar}{\textnormal{\textsc{MergeStar}}}
\newcommand{\Red}{\textnormal{\textsf{Red}}}
\newcommand{\Blue}{\textnormal{\textsf{Blue}}}
\newcommand{\FindOut}{\textsc{FindOutgoing}}
\newcommand{\HPTestOut}{\textsc{HPTestOut}}
\newcommand{\RCtT}{\textsf{RC2T}}
\newcommand*{\whp}{%
    \@ifnextchar{.}%
        {w.h.p}%
        {w.h.p.\@\xspace}%
}
\newcommand{\wft}{{well-formed tree}}
\newcommand{\satt}{{satisfactory tree}}
\newcommand{\Out}{\textnormal{Out}}
\newcommand{\CC}{\textnormal{\textsf{CC}}}
\newcommand{\cc}{\textnormal{\textsf{cc}}}
\newcommand{\poly}{\textnormal{poly}}
\newcommand{\ie}{i.e., }
\begin{document}
\pagenumbering{gobble}
\maketitle

\begin{abstract}
We consider the problem of constructing distributed overlay networks, where nodes in a reconfigurable system can create or sever connections with nodes whose identifiers they know. Initially, each node knows only its own and its neighbors' identifiers, forming a local channel, while the evolving structure is termed the global channel. The goal is to reconfigure any connected graph into a desired topology, such as a bounded-degree expander graph or a well-formed tree (WFT) with a constant maximum degree and logarithmic diameter, minimizing the total number of rounds and message complexity. This problem mirrors real-world peer-to-peer network construction, where creating robust and efficient systems is desired.

We study the overlay reconstruction problem in a network of $n$ nodes in two models:  \textsf{GOSSIP-reply}{} 
and \textsf{HYBRID}{}. In the \textsf{GOSSIP-reply}{} model, each node can send a message and receive a corresponding reply message in one round. In the \textsf{HYBRID}{} model, a node can send $O(1)$ messages to each neighbor in the local channel and a total of $O(\log n)$ messages in the global channel. 

In both models, we propose protocols for WFT construction with $O\left(n \log n\right)$ message complexities using messages of $O(\log n)$ bits. In the \textsf{GOSSIP-reply}{} model, our protocol takes $O(\log n)$ rounds while in the \textsf{HYBRID} model, our protocol takes $O(\log^2 n)$ rounds. 
Both protocols use $O\left(n \log^2 n\right)$ bits of communication. 

We obtain improved bounds over prior work:

\begin{description}
    \item[\textsf{GOSSIP-reply}{}:] A recent result by Dufoulon et al.~(ITCS 2024)  achieved $O\left(\log^5 n\right)$ round complexity and $O\left(n \log^5 n\right)$ message complexity using messages of at least $\Omega\left(\log^2 n\right)$ bits in \textsf{GOSSIP-reply}{}. With messages of size $O\left(\log n\right)$, our protocol achieves an optimal round complexity of $O(\log n)$ and an improved message complexity of $O(n \log n)$.
    \item[\textsf{HYBRID}{}:] Götte et al. (Distributed Computing 2023) showed an optimal $O(\log n)$-round algorithm with $O(\log^2 n)$ global messages per round  
    which incurs a message complexity of $\Omega(m)$, where $m$ is the number of edges in the initial topology.
    At the cost of increasing the round complexity to $O(\log^2 n)$ while using only $O(\log n)$ messages globally, our protocol achieves a message complexity that is independent of $m$. Our approach ensures that the total number of messages for node $v$, with degree $\deg(v)$ in the initial topology, is bounded by $O\left(\deg(v) + \log n\right)$, while the algorithm of Götte et al.~requires $O\left(\deg(v) + \frac{\log^4 n}{\log \log n}\right)$ messages per node.
\end{description}

\end{abstract}
\newpage
\tableofcontents

\clearpage
\pagenumbering{arabic}
\section{Introduction}




Many of today’s large-scale distributed systems on the Internet, such as peer-to-peer (P2P) and overlay networks, prioritize forming logical networks over relying (only) on the physical infrastructure of the underlying network. In these systems, direct connections between nodes can be virtual, using the physical connections of the Internet, and nodes are considered connected if they know each other’s IP addresses, allowing them to communicate and establish links. Examples of such systems include cryptocurrencies, the Internet of Things, the Tor network, and overlay networks like Chord \cite{stoica2001chord}, Pastry \cite{rowstron2001pastry}, and skip graphs \cite{aspnes2003skip}. These networks have the flexibility to reconfigure themselves by choosing which connections to establish or drop. This work focuses on the challenge of efficiently constructing a desired overlay network from any starting configuration of $n$ nodes, recognizing that the problem has a lower bound of $O(\log n)$, since even in an optimal scenario, it takes at least $O(\log n)$ rounds for two endpoints to connect if the nodes initially form a line.

In this paper, we address the well-explored challenge of efficiently constructing overlay topologies in a distributed manner within reconfigurable networks. This task is crucial in modern P2P networks, where topological properties are vital in ensuring optimal performance. Over the past two decades, numerous theoretical studies \cite{PRU01, LS03, GMS06, CDG07, JP12,augustine2015enabling}, have focused on developing P2P networks that exhibit desirable characteristics such as high conductance, low diameter, and resilience to substantial adversarial deletions. The common approach in these studies is to build a bounded-degree random graph topology in a distributed way, which ensures these properties. This approach leverages the fact that random graphs are likely to be expanders, possessing all the desired attributes \cite{MU05, HLW06}. Random graphs have been extensively used to model P2P networks \cite{PRU01, LS03, MU05, MS06, CDG07, APRU12, APR13, AMMPRU13, APR15}, and the random connectivity topology has been widely adopted in many contemporary P2P systems, including those underpinning blockchains and cryptocurrencies like Bitcoin \cite{MDV20}.

Several works have focused on overlay construction that transforms an arbitrary connected graph into a desired topology \cite{angluin2005fast, gmyr2017distributed, gotte2019faster, gotte2023time, dufoulon2024time}. While minimizing the number of rounds is the primary objective, reducing the total number of messages exchanged (message complexity) is also crucial. The message complexity in \cite{dufoulon2024time} is $O(n \log^5 n)$, whereas in other works it is $\Omega(m)$, where $n$ is the number of nodes and $m$ is the number of edges in the initial topology. In this work, we propose protocols in two models, \GOSSIPr{} and \HYBRID{} (defined formally later), that are both round-efficient and communication-efficient. Additionally, our protocol in the \HYBRID{} model optimizes the node-wise message complexity (i.e., the number of messages each node sends and receives based on its degree) compared to the previous work \cite{gotte2023time}.


Before discussing our results and comparing them with previous works, we formally introduce the models in the next section.
\subsection{Models}


We consider synchronous models on a fixed set of nodes $V$, where each node $v\in V$ has a unique identifier $\id(v)$ of length $O(\log n)$, with $n = |V|$. The \emph{local/input} network is represented by a graph $G = (V, E)$. Without loss of generality, we assume that $G$ is connected. Computation proceeds in synchronous rounds, during which the \emph{global/overlay} network evolves. In each round, nodes can send and receive messages and perform local computations. Unless otherwise specified, messages are $O(\log n)$ bits.

The network is \textit{reconfigurable} in the sense that if $u$ knows the identifier of $v$, then $u$ can send a reconfiguration message to $v$ to establish or drop the communication link. 
Lastly, we also allow implicit edge deletion since it can be easily implemented by only keeping edges established after round $r$. 

In general, we assume that the communication links are reliable, and no messages are dropped when the message capacity of the link is not exceeded. 
We also assume that there is sufficient memory on each computing node for our algorithm to run correctly and is capable of processing all incoming messages at the start of a round within that same round.



In this paper, as already mentioned, we consider two synchronous models: \GOSSIPr{} and \HYBRID{}, formally defined as follows.
%


\paragraph{\GOSSIPr{} model:} 
One of the earliest works in overlay construction by \citeauthor{angluin2005fast} \cite{angluin2005fast} considered a model---now known as the \GOSSIP{} model---where each node is allowed to send only one message per round. Recently, a reply version of this model, the $\GOSSIPr \left(b\right)$ model, was introduced by \cite{dufoulon2024time}, where they developed the first communication-efficient protocol.\footnote{In \cite{dufoulon2024time}, it is referred to simply as the \GOSSIP{}-based model or the \textsf{P2P-GOSSIP} model.} In this model, each node $v$ can perform the following actions in one round:
\begin{enumerate}[topsep=1.5ex, itemsep=0.1em]
    \item Send a message of $O(b)$ bits to a neighbor, where any node whose identifier is known to $v$ is considered a neighbor of $v$. We call this message the \textit{contacting message}.
    \item Receive all messages sent to $v$. Do some local computation.\footnote{Although local computation in Step~2 is not explicitly included in the original definition of \cite{dufoulon2024time}, preparing the outgoing messages appears to require some. Nevertheless, two rounds of the model without local computation in Step~2 suffice to simulate one round of the model with it.}
    \item Send an $O(b)$-bit reply to each of the contacting messages.
    \item Receive an $O(b)$-bit reply. Do some local computation. 
\end{enumerate}
Observe that in this model, there will be at most $2n$ messages sent in each round, $n$ contacting messages and $n$ replying messages. Hence, any algorithm with $O(T)$ round complexity has $O(nT)$ message complexity.

\paragraph{\HYBRID{} model:} The hybrid model was proposed in \cite{augustine2020hybrid} to study shortest path problems and later considered by \cite{gotte2023time} in the context of overlay construction problem.  In this model, the communication is done over both local and global channels. 
The $\HYBRID \left(\alpha, \beta, \gamma\right)$ model is defined by three parameters $\alpha$, $\beta$, and $\gamma$: 
\begin{itemize}[topsep=1.5ex, itemsep=0.1em]
    \item Each message size is $O(\alpha)$ bits.
    \item  Each node can send and receive $O(\beta)$ messages per round to each local neighbor, i.e., the local capacity is $O(\beta)$.
    \item  Each node can send and receive $O(\gamma)$ messages per round to any node whose identifier it knows, i.e., the global capacity is $O(\gamma)$.\footnote{All $O(\gamma)$ messages can be directed to a single global neighbor or spread across $O(\gamma)$ global neighbors.}
\end{itemize}

A subtle aspect of the model arises when a node is sent more messages in a round than its capacity permits. A standard assumption is that the node receives an arbitrary subset of these messages while the rest are dropped. All our algorithms guarantee that, with high probability, this situation never occurs, so the exact handling of such cases is irrelevant to our results.

In this work, we study the $\HYBRID(\log n, 1, \log n)$ model, where each message has size $O(\log n)$ bits, the local capacity is $O(1)$, and the global capacity is $O(\log n)$. Importantly, a node may send a number of local messages proportional to its degree, whereas in the global network it is limited to $O(\log n)$ messages. The rationale for assigning different capacities lies in the cost assumption: local communications, which take place on the given topology, are cheaper, while global communications, which require establishing new links, are more costly.

In addition to the \GOSSIPr{} and \HYBRID{} models, the \PCONGEST{} model has also been widely studied \cite{gmyr2017distributed, gotte2019faster, gotte2023time}. In \PCONGEST{}, each node can send and receive up to $O(\Delta \log n)$ messages of $O(\log n)$ bits per round, where $\Delta$ is the maximum degree of any node in the initial topology. It is important to note that $\mbox{\HYBRID{}} \left(\log n, 1, \log n\right)$ is \emph{weaker} than \PCONGEST{} in the sense that any protocol in $\mbox{\HYBRID{}} \left(\log n, 1, \log n\right)$ can be simulated in the \PCONGEST{} model without asymptotically increasing the round or message complexity asymptotically. Therefore, all results obtained in $\mbox{\HYBRID{}} \left(\log n, 1, \log n \right)$ also apply to \PCONGEST{}. 





{




\subsection{Our contribution and comparison with prior work}
As already mentioned, we focus on designing algorithms in both the   \GOSSIPr{} and \HYBRID{}   models that are efficient in terms of rounds and communication. Additionally, we demonstrate that our protocol for the \HYBRID{} model also achieves improved node-wise message complexity. We also discuss implications for the \PCONGEST{} model. See \Cref{app:model} for a discussion on the tradeoffs between different complexity measures and the motivation for studying node-wise message complexity.

Unless otherwise specified explicitly, all of our results including prior works are randomized and succeed \emph{with high probability} (\whp), i.e., with probability at least $1-1/\poly(n)$.




\subsubsection{Our result in the \GOSSIPr{} model}

Our result in the \GOSSIPr{} model is summarized in the following theorem. It shows that starting from any arbitrary topology, we can transform it into a star overlay. We then demonstrate how a star overlay can be converted into a desired topology by leveraging the properties of the \GOSSIPr{} model.

\begin{restatable}{thm}{star}
    \label{thm:star}
    There is a protocol in the $\mbox{\GOSSIPr} \left(b\right)$ model that can construct a star overlay in $O\left(\log n \cdot\max\left(\frac{\log n}{b}, 1\right)\right)$ rounds with $O\left(n \log n \cdot\max\left(\frac{\log n}{b}, 1\right)\right)$ messages \whp.
\end{restatable}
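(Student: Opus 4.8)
The plan is to construct a star overlay by repeatedly merging rooted trees (or ``stars'') until a single star remains, using the reply mechanism of the $\GOSSIPr$ model to let a node learn its contacter and coordinate pointer updates. Throughout, I maintain an invariant that the current global topology is a collection of disjoint rooted stars, where each star has a \emph{center} that knows the identifiers of all its leaves, and each leaf knows its center. Initially, since the input graph $G$ is connected, I first need a bootstrap phase that turns $G$ into such a collection of stars; a natural choice is for every node to point at (say) its minimum-identifier neighbor, forming a pseudoforest whose components I can collapse into stars. The bulk of the work is the merging phase, which I would run for $O(\log n)$ iterations, aiming to halve the number of stars (in expectation or via a symmetry-breaking rule) each iteration so that after $O(\log n)$ iterations only one star survives.

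The core merging step I would implement as follows. In each iteration, every star center attempts to connect to another center (via an edge its star currently knows about), and uses the contacting/reply exchange to agree on which of the two becomes the new center and which becomes a leaf. The reply message is exactly what lets the contacted center communicate back its identity and its decision, so that both stars can be fused consistently in $O(1)$ rounds of the model. To avoid the classical problem of long chains of pending merges collapsing slowly, I would use a random orientation (e.g.\ each center flips a coin to be a \emph{proposer} or \emph{acceptor}, in the spirit of the $\Red/\Blue$ coloring the paper sets up) so that merges compose into larger stars rather than into paths; this guarantees a constant-factor reduction in the number of stars per iteration \whp. After a merge, the absorbed leaves must be re-pointed to the new center, and the new center must learn them; since in the $\GOSSIPr(b)$ model a center can send and receive only $O(b)$ bits per round per message, redistributing the knowledge of a large star's leaf set costs extra rounds. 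This is where the $\max\!\left(\frac{\log n}{b},1\right)$ factor enters: transmitting the $O(\log n)$-bit identifiers and pointer-update instructions through $O(b)$-bit messages requires $O\!\left(\max\!\left(\frac{\log n}{b},1\right)\right)$ rounds per logical communication.

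For the complexity bookkeeping, I would argue that each of the $O(\log n)$ merge iterations uses $O\!\left(\max\!\left(\frac{\log n}{b},1\right)\right)$ rounds, yielding the claimed $O\!\left(\log n\cdot\max\!\left(\frac{\log n}{b},1\right)\right)$ round bound. For messages, the key observation (already noted in the model description) is that each round of $\GOSSIPr$ generates at most $2n$ messages, so the message complexity is automatically $O(n)$ times the round complexity, giving $O\!\left(n\log n\cdot\max\!\left(\frac{\log n}{b},1\right)\right)$. I must make sure that the leaf re-pointing after a merge does \emph{not} secretly inflate per-round message counts beyond $O(n)$ total, which holds because each node participates in at most one merge per iteration and each node sends at most one contacting message per round by definition of the model.

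The hard part will be controlling the merging so that the number of stars provably drops by a constant factor each iteration \emph{while} keeping the total message and round overhead within budget. The naive ``point to a neighbor'' rule can create long directed chains that take $\Omega(\log n)$ rounds each to contract, which would blow up the round complexity to $O(\log^2 n)$; avoiding this requires the randomized proposer/acceptor symmetry breaking together with a careful argument that, conditioned on the coloring, each surviving star absorbs in expectation a constant fraction of its merge partners without any center being overwhelmed by simultaneous incoming merge requests (which the $O(b)$-bit reply capacity would otherwise violate). Establishing this concentration \whp{}, and verifying that the re-pointing of leaves to new centers can be pipelined so it fits inside the $O\!\left(\max\!\left(\frac{\log n}{b},1\right)\right)$-round-per-iteration budget, is the crux of the proof.
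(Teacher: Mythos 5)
Your high-level skeleton---Boruvka-style star merging with a random proposer/acceptor coloring, $O(\log n)$ iterations, and message complexity piggy-backing on the fact that each round of the model generates only $O(n)$ messages---matches the paper's \MergeStar{} protocol. However, your proposal omits its central technical ingredient: how a cluster finds an edge of the input graph $G$ that leaves its own cluster. You write that each center connects to another center ``via an edge its star currently knows about,'' but after a few merges a star cannot know which of the $G$-edges incident to its members are still inter-cluster without testing them, and in the \GOSSIPr{} model each node may send only \emph{one} contacting message per round, so having each leaf compare cluster identifiers with all of its $G$-neighbors costs $\Omega(\Delta)$ rounds---exactly the trap the paper explicitly warns against. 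The paper solves this with the hashing-based \FindOut{} subroutine of King et al.\ (\Cref{lm:findoutgoing}), executed over the star by exploiting the reply mechanism: each leaf sends its hashed incident-edge information to the leader as a contacting message, the leader replies with the broadcast data (hash seed, selected bracket), and after a constant number of such exchanges the leader holds an outgoing edge with probability at least $1/16$. Without this (or an equivalent sketching) ingredient, the claimed $O(1)$-rounds-per-iteration budget cannot be met; this is precisely the ``crux'' your final paragraph defers rather than resolves.

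Several secondary points also diverge from what actually works. First, no bootstrap phase is needed: the paper starts with every node as a singleton cluster, whereas your pseudoforest-collapsing phase is itself nontrivial (its components can have diameter $\Omega(n)$) and buys nothing. Second, your invariant is stronger than necessary, and it is what creates the leaf-set redistribution problem you then struggle with: the paper only requires that each node know its leader; a center never needs to know its leaves, because in \GOSSIPr{} it merely replies to whatever requests arrive, so absorbed leaves learn the new leader by asking their \emph{old} leader, which replies with the new leader's identifier. Consequently the $\max\left(\frac{\log n}{b},1\right)$ factor has nothing to do with shipping leaf sets---it is only the cost of splitting $O(\log n)$-bit identifiers and hash words into $O(b)$-bit messages. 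Third, a constant-factor drop in the number of stars per iteration cannot hold \whp (consider the step at which only two stars remain); the paper instead proves the reduction \emph{in expectation} (\Cref{lem:reduct}) and converts it into a \whp $O(\log n)$-phase bound by iterating expectations and applying Markov's inequality (\Cref{lm:termination}). Finally, your worry that many simultaneous incoming merge requests would ``violate the $O(b)$-bit reply capacity'' is unfounded: the model lets a node reply to every contacting message it receives, since capacities are per message, not per node.
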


Note that building a star topology is similar to doing leader election in the \textit{reconfigurable} network. Observe that when the star topology is constructed, we can treat the distinguished center node in the star as the leader and perform many tasks on the leader node locally. More specifically, in $\GOSSIPr(b)$, we can reconfigure the network from the star topology to any topology whose maximum degree is $\Delta = O\left(\frac{b}{\log n}\right)$ in $O(1)$ round.

\begin{observation}
    \label{lm:central}
    If the initial topology $G$ is a star, then there is an $O(\Delta(H))$-round protocol in the $\mbox{\GOSSIPr} \left(b\right)$ model to construct an overlay network with a desired topology $H$ whose maximum degree is $\Delta(H) = O\left(\frac{b}{\log n}\right)$.
\end{observation}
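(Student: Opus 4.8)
The plan is to exploit the fact that in a star the center $c$ is adjacent to every other node and therefore already knows the identifier of every node in $V$. Since $c$ knows the entire node set, it can compute the target graph $H$ in a purely local step: any topology that is determined as a function of the participating identifiers (an expander, a WFT, etc.) is fixed once the identifier set is known, so $c$ simply computes the adjacency of $H$ with no communication. The whole task then reduces to getting, for each node $v$, the identifiers of its $\deg_H(v) \le \Delta(H)$ neighbours in $H$ out to $v$, and then letting the endpoints install the corresponding links, all within the one-contacting-message-per-round budget of the $\GOSSIPr(b)$ model.

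First I would use a single round of the model to distribute all adjacency information, crucially relying on the reply channel. Every leaf $v$ sends a contacting message to $c$ (it knows $\id(c)$); by the model, $c$ receives all $n-1$ of these messages, and in the reply phase it sends back to each $v$ the list $N_H(v)$ of $v$'s neighbours in $H$. The key quantitative point is that this entire list fits in one reply: it consists of at most $\Delta(H)$ identifiers of $O(\log n)$ bits each, i.e.\ $O\!\left(\Delta(H)\cdot \log n\right) = O(b)$ bits, which is exactly why the hypothesis $\Delta(H) = O\!\left(b/\log n\right)$ is imposed. After this single round every node, including $c$ (which knows its own $H$-neighbourhood directly), knows the identifiers of all of its $H$-neighbours; and because $H$ is undirected, both endpoints of every edge now know each other.

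Next I would install the links. Orient each edge of $H$ towards, say, the endpoint of smaller identifier, so that exactly one endpoint is responsible for establishing it, and the responsible endpoint sends a reconfiguration message along the now-known identifier. Since a node can send only one contacting message per round while it may be responsible for up to $\Delta(H)$ edges, this phase is spread over $O(\Delta(H))$ rounds, with each node contacting its outgoing neighbours one per round; this is the bottleneck, and it is where the claimed $O(\Delta(H))$ round bound comes from. (If one only requires the overlay as a knowledge graph, where an edge means both endpoints know each other's identifier, the distribution round already suffices and the protocol finishes in $O(1)$ rounds, consistent with the stronger statement in the surrounding text.) The original star edges are discarded for free via the implicit edge deletion permitted by the model.

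The main obstacle, and really the only place any care is needed, is the packing constraint in the reply step: one must verify that an entire neighbour list, rather than a single identifier, travels inside one $O(b)$-bit reply, which is precisely the content of the degree restriction $\Delta(H) = O\!\left(b/\log n\right)$. Everything else is routine: $c$'s local computation of $H$ costs no communication, the reply mechanism lets $c$ address all $n-1$ leaves simultaneously despite being able to initiate only one contacting message of its own, and the final installation phase is a straightforward round-robin over each node's bounded neighbour list.
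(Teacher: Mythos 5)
Your proposal is correct and follows essentially the same route as the paper: leaves contact the center, the center computes $H$ locally and sends each node its neighborhood $N_H(v)$ in a single $O(b)$-bit reply (valid precisely because $\Delta(H)\cdot O(\log n) = O(b)$), and the links are then installed over $O(\Delta(H))$ rounds. The extra details you add (orienting edges by identifier to assign responsibility, invoking implicit edge deletion) are harmless refinements of the same argument.
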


\begin{proof}
    Firstly, every node except the distinguished center node sends a request message to the center node. Then the center node will compute an assignment of the nodes in the desired topology locally and reply to every node $v$ with their neighborhood $N_{H}(v)$. Each node then takes $O\left(\Delta(H)\right)$ rounds to establish connections with the new neighbors formally. Note that the center node needs to send $O\left(\Delta(H) \log n\right)$ bits to every leaf node. Since $\Delta(H) = O\left(\frac{b}{\log n}\right)$, the information that the center node needs to send is $O\left(\Delta(H) \log n\right) = O(b)$ bits, which can be sent in one message.
\end{proof}

We obtain the following corollary by applying \Cref{lm:central} and \Cref{thm:star} with $b=O(\log n)$.
\begin{corollary}
    \label{cor:gossip}
    There is a protocol in the \GOSSIPrOne{}  model that can construct any constant degree overlay network in $O(\log n)$ rounds with $O(n \log n)$ messages \whp.
  
\end{corollary}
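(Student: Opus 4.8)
The plan is to obtain the corollary by composing the two building blocks already established, choosing the budget parameter $b = \Theta(\log n)$ so that the $\GOSSIPr(b)$ model specializes to \GOSSIPrOne. First I would invoke \Cref{thm:star} with this choice of $b$. Since $b = \Theta(\log n)$ makes $\max\!\left(\frac{\log n}{b}, 1\right) = \Theta(1)$, the round bound $O\!\left(\log n \cdot \max\!\left(\frac{\log n}{b}, 1\right)\right)$ collapses to $O(\log n)$ and the message bound $O\!\left(n \log n \cdot \max\!\left(\frac{\log n}{b}, 1\right)\right)$ collapses to $O(n \log n)$. Thus the first phase builds a star overlay in $O(\log n)$ rounds using $O(n \log n)$ messages, \whp.

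Next I would feed the star overlay produced above into \Cref{lm:central}, treating the resulting star configuration as the starting point of the second phase. The target topology $H$ has constant maximum degree, so $\Delta(H) = O(1)$; with $b = \Theta(\log n)$ the admissibility condition $\Delta(H) = O(b / \log n) = O(1)$ is satisfied, so the observation applies and reconfigures the star into $H$ in $O(\Delta(H)) = O(1)$ rounds. The point to check here is exactly this degree compatibility, together with the reply-size constraint: the center must encode each leaf's neighborhood $N_H(v)$ inside a single $O(b)$-bit message, which holds because $\Delta(H)\log n = O(\log n) = O(b)$.

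It then remains to account for the messages spent in the second phase. As observed in the description of the \GOSSIPr{} model, each round carries at most $2n$ messages (at most $n$ contacting and $n$ replying), so the $O(1)$ rounds of the second phase contribute only $O(n)$ messages. Summing the two phases yields $O(\log n) + O(1) = O(\log n)$ rounds and $O(n \log n) + O(n) = O(n \log n)$ messages in total, and a union bound over the two phases preserves the high-probability guarantee.

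I do not anticipate a genuine obstacle, since the corollary is a direct specialization and composition of results already in hand; the only real care is bookkeeping. The most delicate point is confirming that the constant-degree requirement on $H$ matches precisely the regime $\Delta(H) = O(b/\log n)$ available at $b = \Theta(\log n)$: a target degree growing with $n$ would violate the single-message reply bound of \Cref{lm:central} and would instead force $b = \omega(\log n)$, i.e., a different model.
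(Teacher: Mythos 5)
Your proposal is correct and follows exactly the paper's own route: the paper obtains \Cref{cor:gossip} precisely by applying \Cref{thm:star} with $b = O(\log n)$ and then \Cref{lm:central} to reconfigure the star into the constant-degree target, which is what you do. Your additional bookkeeping (checking $\Delta(H) = O(b/\log n)$, bounding the second phase's messages by $O(n)$ via the model's per-round message cap, and noting the probability guarantee carries over) is sound; the only superfluous step is the union bound, since \Cref{lm:central} is deterministic.
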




\paragraph{Comparison with \Cite{dufoulon2024time}:} The algorithm by \citet{dufoulon2024time}  in the $\mbox{\GOSSIPr}\left(b\right)$ model, with $b=\Omega(\log^2 n)$, converts any arbitrary topology into a constant-degree expander in $O(\log^5 n)$ rounds, with a message complexity of $O(n \log^5 n)$. Thus, \Cref{cor:gossip}
 provides a strict improvement over \cite{dufoulon2024time} in both round and message complexity. Additionally, our algorithm can produce any constant-degree overlay, whereas the algorithm of \cite{dufoulon2024time} could only construct a constant-degree expander. The comparison of our result in the $\mbox{\GOSSIPr}\left(b\right)$ model with that of \cite{dufoulon2024time} is also presented in \Cref{tab:gossip}.


\begin{table}[htb]
    \centering
    \small
    \renewcommand{\arraystretch}{1.2}
    \setlength{\tabcolsep}{6pt} 
     \caption{Improvements in the $\mbox{\GOSSIPr} \left(b\right)$ model.}
     \vspace{5pt}
    \label{tab:gossip}
    \begin{tabular}{c|c|c|c|c}
        \hline \hline
        
        Reference &  $b$ & Rounds & Total message complexity & Target topology \\
        
        \hline \hline 
      
        \cite{dufoulon2024time} & $\Omega(\log^2 n)$ & $O(\log^5 n)$ & $O(n\log^5 n)$ & $O(1)$-degree expander \\
         
        \hline
         
        \Cref{cor:gossip} & $O(\log n)$ & $O(\log n)$ & $O(n\log n)$ & Any $O(1)$-degree graph \\
        
       
      
        \hline \hline 
    \end{tabular}
   
\end{table}


\subsubsection{Our result in the \HYBRID{} model}

In the $\mbox{\HYBRID}\left(\log n, 1, \log n\right)$ model, our main result is summarized in the following theorem: we show that starting from any arbitrary initial topology, it is possible to transform the network into a   \emph{well-formed tree} (WFT) efficiently. A \wft{} with $n$ nodes is defined as one that has a constant maximum degree and a depth of $O(\log n)$. Furthermore, we discuss how a \wft{} can be efficiently converted into a constant-degree expander in $\mbox{\HYBRID{}}(\log n, 1, \log n)$ model using the results from prior work in \cite{dufoulon2024time, gotte2023time}.


\begin{restatable}{thm}{hybridwft}
    \label{thm:hybridwft}
    There is a protocol in the $\mbox{\HYBRID}\left(\log n, 1, \log n\right)$ model that can construct a well-formed tree overlay from any input graph $G$ in $O\left(\log^2 n\right)$ rounds with $O\left(n \log n\right)$ messages \whp. Moreover, each node $v$ sends and receives at most $O(\deg_G(v) + \log n)$ messages throughout the protocol.
\end{restatable}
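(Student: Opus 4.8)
The plan is to construct the \wft{} by a Bor\r{u}vka-style contraction: maintain a partition of $V$ into clusters, each organized as a \wft{} of constant degree and $O(\log n)$ depth rooted at a leader, and repeatedly merge clusters along edges of $G$ until a single cluster, spanning all of $V$, remains. Since $G$ is connected, every non-maximal cluster has an outgoing edge, so the process always makes progress; the objective of each \emph{phase} is to absorb a constant fraction of the clusters, so that $O(\log n)$ phases suffice \whp. For initialization, each node sends its identifier to all of its $G$-neighbors exactly once (this is the only step in which a node spends more than $O(\log n)$ messages, and it accounts for the $\deg_G(v)$ term), and then becomes a singleton cluster.

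The key data structure is a \emph{contracted} cluster adjacency: between two adjacent clusters I keep only a single representative inter-cluster link, held in the global channel by the two leaders, collapsing parallel links and discarding self-loops whenever clusters merge. Maintaining the adjacency by contraction, rather than re-scanning the $\deg_G(v)$ incident edges every phase, is what keeps the amortized per-node cost at $O(\log n)$: each original edge is charged $O(1)$ when it first becomes internal. Each phase then runs in three steps. (i) \emph{Edge selection}: every leader picks one representative outgoing link using $O(1)$ messages. (ii) \emph{Coin flipping}: each cluster colours itself \Red{} or \Blue{} uniformly at random, and every \Blue{} cluster whose selected edge points to a \Red{} cluster merges into it; a standard calculation shows each cluster is absorbed with probability $\ge 1/4$, so the count decreases geometrically and reaches $1$ within $O(\log n)$ phases \whp. (iii) \emph{Rebalancing} via \MergeStar{}: the \wft{}s of all clusters that chose a common \Red{} cluster are stitched into one \wft{}.

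To stop depth from accumulating across phases I would merge \emph{by size}, with each leader tracking its cluster size and attaching the smaller tree at a depth chosen so the combined tree has depth $\log(\text{size})+O(1)$, in the style of merging two balanced heaps along a single root-to-leaf path; children that would violate the constant-degree bound are pushed one level down over $O(\log n)$ rounds. Each step touches a participating node with $O(1)$ messages and finishes in $O(\log n)$ rounds (dominated by convergecasts/broadcasts and \MergeStar{} splice paths, both of length $O(\log n)$). Summing over $O(\log n)$ phases yields $O(\log^2 n)$ rounds, $O(\log n)$ global messages per node, and $O(n\log n)$ messages total; combined with the one-time $\deg_G(v)$ local scan this gives the claimed $O(\deg_G(v)+\log n)$ per-node bound.

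The hard part will be steps (ii)--(iii) under the $\HYBRID\left(\log n,1,\log n\right)$ capacity limits. Two points need the most care. First, the contracted adjacency must be unioned and deduplicated when clusters merge without any leader receiving more than $O(\log n)$ messages in a round and without ever re-examining an already-internal edge; I expect to need a distributed deduplication that charges each removed link to a distinct edge, so the global total stays $O(n\log n)$. Second, and most importantly, a single \Red{} cluster may absorb many \Blue{} clusters in one phase, so \MergeStar{} must combine them into a balanced tree (for instance by a tournament among the absorbed roots) rather than into a star, and the depth invariant ``a size-$s$ cluster has a \wft{} of depth $O(\log s)$'' must be shown to survive every merge. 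Proving that this invariant is preserved, simultaneously with the high-probability geometric decrease in the number of clusters and with each node issuing only $O(1)$ global messages per phase, is the technical core of the argument.
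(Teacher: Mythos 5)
There is a genuine gap, and it sits exactly where you placed your ``hard part'': the contracted cluster adjacency is not just technically delicate, it is incompatible with the bounds you are trying to prove. Maintaining an explicit, deduplicated set of inter-cluster links held by leaders requires, at every merge, transferring and unioning the adjacency lists of the absorbed clusters to the new leader. Already after initialization the total size of these lists is $\Theta(m)$, so over the execution this costs $\Omega(m)$ messages (and in the worst case $\Omega(m\log n)$, since a surviving link can be handed off at every leadership change); this reintroduces the dependence on $m$ that the theorem explicitly excludes, and it also forces single leaders to receive far more than $O(\deg_G(v)+\log n)$ messages, violating both the node-wise bound and the $O(\log n)$ per-round global capacity of $\HYBRID(\log n,1,\log n)$. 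The paper's proof never materializes cluster adjacency at all: each cluster runs the sketching subroutine $\FindOut{}$ (\Cref{lm:findoutgoing}), which finds an outgoing edge with constant probability using only $O(|\text{cluster}|)$ messages per phase, so the per-phase cost is $O(n)$ regardless of $m$. Relatedly, your step (ii) routes merge requests leader-to-leader over the global channel; if many \Blue{} clusters select the same \Red{} cluster, that leader is sent more than $O(\log n)$ global messages in one round, which the model does not permit. The paper instead sends requests over the sampled \emph{local} edges (local capacity is per-neighbor, so a node may legitimately receive up to $\deg_G(v)$ such requests), and handles congestion at the receiving node rather than at the leader.

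The second missing idea is what actually drives the $O(\deg_G(v)+\log n)$ node-wise bound: effective degree reduction via matching of rejected requests. In the paper, a node $v$ that receives $k$ merging requests does not merely reject the unsuccessful ones; it pairs up the rejected requesting clusters into a matching and forces each matched pair to merge with each other. This guarantees that $v$'s number of \emph{active neighboring clusters} drops by at least $\lfloor k/2\rfloor$-type amounts, and a potential argument (\Cref{lm:node-wise-local}) then bounds the total number of replies $v$ ever sends by $O(\deg_G(v)+\log n)$. Without this mechanism, an adversarial topology keeps $\Theta(\deg_G(v))$ distinct neighboring clusters sending requests to $v$ in every one of the $\Theta(\log n)$ phases, giving $\Theta(\deg_G(v)\log n)$ messages at $v$. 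Your proposal contains no substitute for this step. Two further, more minor divergences: the paper maintains only \satt{}s ($O(\log n)$ degree and depth) during the phases, rebuilding them each phase via an Euler tour plus a randomized cycle-to-tree procedure (\RCtT{}), and converts to a \wft{} once at the end---this randomized rebuild, not size-balanced splicing, is what keeps the per-phase message count at $O(n)$; and the success probability of edge selection is $1/16$ (from the sketch) rather than $1$, which your constant-factor decrease argument would need to absorb.
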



We remark that, although the sum of the node-wise bounds appears to imply an $O(m)$ message complexity, in our algorithm only a small subset of nodes may incur as many as $\Omega(\deg_G(v))$ messages, and the message complexity remains bounded by $O(n \log n)$. Due to the use of  randomness, it is not possible to determine in advance which nodes incur these higher costs.

Our algorithm follows a Boruvka-style cluster-merging process while maintaining the invariant that each cluster induces a well-formed tree. Outgoing edges are identified using sketching techniques. To achieve the node-wise message bound of $O(\deg_G(v) + \log n)$, we address the high communication load on star centers during cluster merges by introducing a matching-based method that pairs clusters for merging, even without a direct connecting edge, while ensuring that the total number of clusters reduces by a constant factor in each round of the merging process.   

To further optimize the message complexity, we employ a randomized procedure for constructing an $O(\log n)$-degree, $O(\log n)$-depth tree from a cycle. This improves upon the deterministic pointer-jumping process of prior work, yielding an $O(\log n)$-factor reduction in message cost.


Through minor modifications of the works in \cite{dufoulon2024time, gotte2023time}, we restate the following lemma, which enables the efficient transformation of a constant degree overlay network (e.g., a well-formed tree) into a constant degree expander network with constant conductance \whp in the $\HYBRID(\log n, 1, \log n)$ model. 

\begin{restatable}{lm}{wfttoexp}\textnormal{\cite{dufoulon2024time,gotte2023time}}
    \label{lm:wft-expander}
    Consider the $\mbox{\HYBRID}(\log n, 1, \log n)$ model. For any constant $\Phi \in (0,1/10]$, there is a protocol that takes $O\left(\log^2 n\right)$ rounds and $O\left(n \log^2 n\right)$ messages to convert an $O(1)$-degree overlay network into an $O(1)$-degree expander network with conductance at least $\Phi$, \whp. Moreover, each node $v$ sends and receives at most $O\left(\frac{\log^3 n}{\log \log n}\right)$ messages \whp.
\end{restatable}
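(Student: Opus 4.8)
The plan is to treat the expander construction of \cite{gotte2023time, dufoulon2024time} as a pair of named subroutines, \CExp{} and \ExpDR{}, and to verify that both can be executed within the $\HYBRID(\log n, 1, \log n)$ model with the stated round, total-message, and --- most delicately --- per-node message bounds. The only genuinely new work is bookkeeping: re-checking the complexity of each primitive under the global capacity constraint of $O(\log n)$ messages per node per round, since the input is already an $O(1)$-degree overlay and hence the hardest ingredients of the original constructions (taming an arbitrary topology) are not needed here.

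First I would secure an $O(\log n)$-depth, $O(1)$-degree backbone. A general constant-degree overlay may have large diameter, so I cannot aggregate along it directly; instead I invoke \Cref{thm:hybridwft} to obtain (or, if a \wft{} is already the input, simply use) a \wft{} in $O(\log^2 n)$ rounds. Aggregating along this tree, every node learns $n$ and a distinct rank in $\{0,\dots,n-1\}$ in $O(\log n)$ rounds with $O(1)$ messages per tree edge. The \CExp{} step then has each node generate $O(1)$ random edges --- connecting to a few uniformly chosen ranks --- so that the resulting bounded-degree random graph has conductance at least a fixed constant \whp{} by the standard expansion guarantee for random $d$-out graphs. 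Realizing a random-rank connection is the one model-specific operation: a lookup for the node holding a given rank is routed over the backbone, and I would spread these lookups across the global channel so that no node relays more than $O(\log n)$ messages in any single round.

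If \CExp{} leaves some nodes with super-constant degree, I would then run \ExpDR{}, the degree-reduction transformation of \cite{gotte2023time}, to bring the maximum degree down to $O(1)$ while losing only a constant factor in conductance; rescaling the constants finally delivers any prescribed $\Phi \in (0, 1/10]$. Each primitive runs in $O(\log n)$ rounds, and because the boosting and reduction proceed over $O(\log n)$ phases the total round complexity is $O(\log^2 n)$. Summing, on average, the $O(\log n)$ messages a node contributes per phase over $n$ nodes and $O(\log n)$ phases yields the global bound $O(n\log^2 n)$, with most nodes far below this average.

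The hard part --- and where the ``minor modifications'' actually bite --- is the per-node bound $O(\log^3 n/\log\log n)$ coexisting with the rigid per-round global capacity $O(\log n)$. Naively routing rank lookups through the backbone would force nodes near the root to relay $\Theta(n)$ messages, violating both the capacity and the node-wise bound. I would therefore import the load-balanced routing/aggregation primitive of \cite{gotte2023time}, whose bounded-depth, high-arity structure is precisely what produces the $1/\log\log n$ saving, and re-verify that each phase contributes at most $O(\log^2 n/\log\log n)$ messages to any single node \whp, so that over $O(\log n)$ phases the per-node total is $O(\log^3 n/\log\log n)$. Finally I would union-bound the high-probability events --- expansion of the random graph, correctness of degree reduction, and the per-phase load bounds --- over the $\poly(n)$ relevant events to preserve the overall $1 - 1/\poly(n)$ guarantee.
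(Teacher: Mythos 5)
There is a genuine gap, and it sits exactly where you flagged the difficulty. Your plan replaces the cited \CExp{} procedure with a different construction: assign ranks along a \wft{} backbone and build a random $d$-out graph by routing rank lookups over that backbone. In the $\HYBRID(\log n, 1, \log n)$ model this cannot meet the stated bounds. The backbone has constant degree and depth $O(\log n)$, so for uniformly random (source, destination) pairs a constant fraction of the $n$ lookup paths must pass through the root (or through whichever node separates the tree into large subtrees). That node must therefore relay $\Theta(n)$ messages, which both exceeds the claimed node-wise bound of $O\left(\frac{\log^3 n}{\log\log n}\right)$ and, under the per-round global capacity of $O(\log n)$, forces $\Omega(n/\log n)$ rounds --- not $O(\log^2 n)$. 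Your proposed repair, importing ``the load-balanced routing/aggregation primitive of \cite{gotte2023time},'' invokes a black box that does not exist in that work in the form you need: no routing scheme over a constant-degree tree can evade the cut argument above, because the congestion is forced by which edges exist, not by how messages are scheduled. The $1/\log\log n$ factor in \cite{gotte2023time,dufoulon2024time} (and in the paper's proof) does not come from a routing structure at all; it is a balls-into-bins/Chernoff bound on the \emph{maximum load of random-walk tokens}, which stays small precisely because random walks on a (nearly) regular graph self-load-balance, with expected load $O(1)$ per node per step.

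The paper's proof avoids your bottleneck entirely and needs no backbone, ranks, or lookups: it runs \CExp{} of \cite{gotte2023time} --- a random-walk-based conductance-boosting procedure --- directly on the given $O(1)$-degree overlay (no low-diameter assumption is required, so invoking \Cref{thm:hybridwft} first is unnecessary), obtaining an $O(\log n)$-regular expander in $O(\log n)$ rounds with $O(n\log^2 n)$ messages; it then applies \ExpDR{} of \cite{dufoulon2024time}, in which each node releases $O(1)$ tokens that perform $O(\log n)$-step random walks over $O(\log n)$ phases and nodes finally link to the holders of their settled tokens. The per-round token load is $O\left(\frac{\log n}{\log\log n}\right)$ \whp{} by Chernoff, and multiplying by the $O(\log^2 n)$ rounds gives the node-wise bound $O\left(\frac{\log^3 n}{\log\log n}\right)$; the total message count is $O(n\log^2 n)$ since $O(n)$ tokens each take $O(\log^2 n)$ steps. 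To fix your write-up you would have to replace the rank-lookup construction of the random graph with a mechanism whose communication pattern is itself load-balanced --- which is exactly what the random-walk-based \CExp{}/\ExpDR{} pipeline provides.
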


The proof of \Cref{lm:wft-expander} is provided in \Cref{sec:expander} for completeness.
By applying \Cref{lm:wft-expander} after \Cref{thm:hybridwft}, we can construct an expander overlay network in the $\mbox{\HYBRID}(\log n, 1, \log n)$ model in $O(\log^2 n)$ rounds with $O(n \log^2 n)$ messages.

\begin{corollary}
    \label{cor:hybridwf}
    There is a protocol in the $\mbox{\HYBRID}(\log n, 1, \log n)$ model with the following guarantees:
\begin{itemize}
    \item It constructs a constant-degree expander graph from any input graph $G$ in $O\left(\log^2 n\right)$ rounds using $O\left(n \log^2 n\right)$ messages \whp.
    \item Each node $v$ sends and receives at most $O\left(\deg_G(v) + \frac{\log^3 n}{\log \log n}\right)$ messages.
\end{itemize}

\end{corollary}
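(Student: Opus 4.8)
The plan is to prove the corollary by sequentially composing the two building blocks already established: first construct a well-formed tree (WFT) via \Cref{thm:hybridwft}, then feed that WFT as the $O(1)$-degree input overlay into the conversion procedure of \Cref{lm:wft-expander} to obtain a constant-degree expander. Since both protocols operate in the same $\mbox{\HYBRID}(\log n, 1, \log n)$ model, no model-simulation overhead is incurred, and the output of the first phase---a constant-degree, $O(\log n)$-depth tree---exactly matches the input hypothesis of the second phase, namely a constant-degree overlay network.

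First I would run the protocol of \Cref{thm:hybridwft}. By definition a WFT has constant maximum degree and $O(\log n)$ depth, so in particular it is an $O(1)$-degree overlay network and is a legal input to the next step. This phase takes $O(\log^2 n)$ rounds and $O(n \log n)$ messages, and each node $v$ sends and receives at most $O(\deg_G(v) + \log n)$ messages. I would then invoke \Cref{lm:wft-expander} with any fixed constant $\Phi \in (0, 1/10]$, treating the just-built WFT as the starting overlay. This phase takes $O(\log^2 n)$ rounds and $O(n \log^2 n)$ messages, and each node sends and receives at most $O\!\left(\frac{\log^3 n}{\log\log n}\right)$ messages \whp.

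Summing the two phases yields the claimed bounds. The round complexity is $O(\log^2 n) + O(\log^2 n) = O(\log^2 n)$; the total message complexity is $O(n \log n) + O(n \log^2 n) = O(n \log^2 n)$, dominated by the conversion phase. For the node-wise bound, node $v$ accumulates at most $O(\deg_G(v) + \log n)$ messages in the first phase and $O\!\left(\frac{\log^3 n}{\log\log n}\right)$ in the second, and since $\log n = O\!\left(\frac{\log^3 n}{\log\log n}\right)$ the per-node total simplifies to $O\!\left(\deg_G(v) + \frac{\log^3 n}{\log\log n}\right)$. The high-probability guarantee follows from a union bound over the two phases, each of which fails with probability at most $1/\poly(n)$.

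The only place where genuine care is needed---and what I regard as the main (if modest) obstacle---is ensuring that the two complexity accountings are taken with respect to the correct graphs. The $\deg_G(v)$ term arises solely from the WFT-construction phase, which operates over the input graph $G$; the expander-construction phase operates over the constant-degree WFT overlay, so its per-node cost is independent of $\deg_G(v)$. This is precisely why the original degree contributes only through the first phase. As with \Cref{thm:hybridwft}, I would also note that although naively summing the node-wise bounds suggests an $\Omega(m)$ total, only a small (and randomly determined) subset of nodes actually incurs an $\Omega(\deg_G(v))$ cost, keeping the overall message complexity at $O(n \log^2 n)$.
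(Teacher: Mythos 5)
Your proposal is correct and matches the paper's own argument exactly: the paper obtains \Cref{cor:hybridwf} precisely by running \Cref{thm:hybridwft} to build the well-formed tree and then applying \Cref{lm:wft-expander} to it, with the same summation of round, total-message, and node-wise bounds (the expander phase dominating both the $O(n\log^2 n)$ total and the $O\left(\frac{\log^3 n}{\log\log n}\right)$ additive per-node term). Your added care about attributing the $\deg_G(v)$ term solely to the first phase is a sound observation and consistent with the paper.
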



\paragraph{Comparison with \cite{gotte2023time}:} \citet{gotte2023time} investigated the problem of overlay reconstruction in the $\mbox{\HYBRID}(\log n, 1, \log^2 n)$ model, aiming to convert an arbitrary initial topology into a well-formed tree or a constant-degree expander. Their algorithm achieved an optimal round complexity of $O(\log n)$ rounds with a message complexity of $\Omega(m + n \log^3 n)$, and the maximum number of messages sent or received by a node of degree $\deg_G(v)$ is $O\left(\deg_G(v) + \frac{\log^4 n}{\log \log n}\right)$, see \Cref{sec:nodewiseprior}. 
In comparison, although our result in \Cref{thm:hybridwft} and \Cref{cor:hybridwf}
} require $O(\log^2 n)$ rounds, it operates in the weaker $\mbox{\HYBRID}(\log n, 1, \log n)$ model. Crucially, the message complexity of our algorithm does not depend on $m$, and it achieves better node-wise message complexity compared to \cite{gotte2023time}. It is important to note, however, that our approach does not lead to an $O(\log n)$-round algorithm even in the $\mbox{\HYBRID}(\log n, 1, \log^2 n)$ model. 
The comparison of our result in the \HYBRID{} model with that of \cite{gotte2023time} is presented in \Cref{tab:hybrid}.
An open question remains: is it possible to achieve the optimal $O(\log n)$ rounds in the $\HYBRID(\log n, 1, \log n)$ model (or even in the $\HYBRID(\log n, 1, \log^2 n)$ model) with a message complexity of $O(n \cdot \text{poly}(\log n))$?


\begin{table}[htb]
    \centering
    \scriptsize
    \renewcommand{\arraystretch}{1.6}
    \setlength{\tabcolsep}{6pt} 
    
     \caption{Improvements in the $\mbox{\HYBRID} \left(\log n, 1, \gamma\right)$ model.}   \label{tab:hybrid}
     \vspace{5pt}
 
    \begin{tabular}{c|c|c|c|c|c}
        \hline \hline
        
   \multirow{2}{*}{Reference} & \multirow{2}{*}{$\gamma$} & \multirow{2}{*}{Rounds} & \multicolumn{2}{c|}{Message complexity} & \multirow{2}{*}{Target topology}\\
        \cline{4-5}
        & & & Total & Node-wise &  \\

        \hline \hline 
      
        \cite{gotte2023time} & $O(\log^2 n)$ & $O(\log n)$ & $\Omega(m + n \log^3 n)$ & $O\left(\deg_G(v)+\frac{\log^4 n}{\log\log n} \right)$ & \WFT/$O(1)$-degree expander \\
         
        \hline
         
        \Cref{thm:hybridwft}  & $O(\log n)$ & $O(\log^2 n)$ & $O(n\log n)$ & $O\left(\deg_G(v)+\log n \right)$ & \WFT \\

         \hline
         
        \Cref{cor:hybridwf}  & $O(\log n)$ & $O(\log^2 n)$ & $O(n\log^2 n)$ & $O\left(\deg_G(v)+\frac{\log^3 n}{\log \log n} \right)$ & $O(1)$-degree expander  \\
      
        \hline \hline 
    \end{tabular}
   
\end{table}

    
  

\subsubsection{Implication in \PCONGEST{} model}
\citet{gotte2023time} considered two models: \PCONGEST{} and $\mbox{\HYBRID}\left(\log n, 1, \log^2 n\right)$, which are not directly comparable. In particular, the result of \citet{gotte2023time} in $\mbox{\HYBRID{}}\left(\log n, 1, \log^2 n\right)$ does not translate directly to \PCONGEST{}. However, as already mentioned before, any protocol in  $\mbox{\HYBRID{}} \left(\log n, 1, \log n \right)$ can be simulated in the \PCONGEST{} model without asymptotically increasing the round or message complexity. Therefore, from \Cref{thm:hybridwft} and \Cref{cor:hybridwf}, we obtain the following corollary.


\begin{corollary}
    \label{cor:p2p}
     There is a protocol in the \PCONGEST{} model that can construct a \wft{} or a constant-degree expander graph from any input graph $G$ in $O\left(\log^2 n\right)$ rounds. The algorithm uses $O\left(n \log n\right)$ messages or $O\left(n \log^2 n\right)$ \whp for \wft{} or a constant-degree expander graph, respectively. Moreover, each node $v$ sends and receives at most $O(\deg_G(v) + \log n)$  or $O\left(\deg_G(v) + \frac{\log^3 n}{\log \log n}\right)$ messages depending on whether the target topology is a \wft{} or a constant-degree expander graph, respectively.
\end{corollary}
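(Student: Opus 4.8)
The plan is to obtain \Cref{cor:p2p} as a direct consequence of \Cref{thm:hybridwft} and \Cref{cor:hybridwf} via the round-by-round simulation of $\mbox{\HYBRID}\left(\log n, 1, \log n\right)$ inside \PCONGEST{} that was already asserted when the models were introduced. Concretely, I would first make that simulation precise and then check that every complexity measure appearing in the two source statements is preserved up to constant factors.

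For the simulation itself, the key observation is that in $\mbox{\HYBRID}\left(\log n, 1, \log n\right)$ a single node $v$ can, in one round, send and receive at most $O(1)$ messages along each of its local edges and at most $O(\log n)$ messages over the global channel. Since the local channel is fixed to the edges of the initial topology $G$, this amounts to at most $O(\deg_G(v))$ local messages, so the total per-round load of $v$ is $O(\deg_G(v) + \log n)$ messages, each of $O(\log n)$ bits. Because $\deg_G(v) \le \Delta$ and $\Delta, \log n \ge 1$, this is $O(\Delta \log n)$, which fits within the per-node budget of \PCONGEST{}; moreover both models use $O(\log n)$-bit messages. In \PCONGEST{} each node may address any identifier it knows, and every node knows the identifiers of its initial neighbors, so both the local and the global messages of the $\mbox{\HYBRID}$ step can be delivered directly. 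Hence one round of $\mbox{\HYBRID}\left(\log n, 1, \log n\right)$ is realized by one round of \PCONGEST{}, with each $\mbox{\HYBRID}$ message mapped to exactly one \PCONGEST{} message.

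With the simulation in hand, all the bounds transfer verbatim. Applying it to \Cref{thm:hybridwft} yields a \wft{} in $O(\log^2 n)$ rounds using $O(n \log n)$ messages, and applying it to \Cref{cor:hybridwf} yields a constant-degree expander in $O(\log^2 n)$ rounds using $O(n \log^2 n)$ messages. Because the simulation is message-preserving up to a constant factor, the node-wise guarantees also carry over: each node $v$ sends and receives $O(\deg_G(v) + \log n)$ messages in the \wft{} construction and $O\left(\deg_G(v) + \frac{\log^3 n}{\log \log n}\right)$ messages in the expander construction.

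The step I expect to require the most care is not any calculation but the bookkeeping that confirms the two channels of $\mbox{\HYBRID}$ really do fit jointly inside the single \PCONGEST{} budget at every node and in every round; in particular one must note that it is the sum $O(\deg_G(v) + \log n)$, and not each channel separately, that has to be compared against $O(\Delta \log n)$, and that this holds uniformly because $\deg_G(v) \le \Delta$. Once this uniform per-round, per-node bound is established, the round count, the total message count, and the node-wise message count all follow immediately from the corresponding guarantees in \Cref{thm:hybridwft} and \Cref{cor:hybridwf}.
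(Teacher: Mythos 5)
Your proposal is correct and follows exactly the route the paper takes: the paper's justification for \Cref{cor:p2p} is precisely the round-by-round simulation of $\mbox{\HYBRID}(\log n, 1, \log n)$ inside \PCONGEST{} (asserted when the models are introduced), applied to \Cref{thm:hybridwft} and \Cref{cor:hybridwf}. Your write-up merely makes explicit the bookkeeping---that the per-round load $O(\deg_G(v) + \log n)$ fits within the $O(\Delta \log n)$ budget and that the message mapping is one-to-one---which the paper leaves implicit.
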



\paragraph{Comparison with \cite{gotte2023time}:}
\citet{gotte2023time} studied the overlay reconstruction problem in the \PCONGEST{} model, where the goal was to transform an arbitrary initial topology into a well-formed tree. Their solution achieved an optimal round complexity of $O(\log n)$ with a message complexity of $\Omega(m \log n + n \log^2 n)$, and the maximum number of messages sent or received by a node of degree $\deg_G(v)$ is $O\left(\deg_G(v)\cdot \frac{\log^3 n}{\log \log n}\right)$, see \Cref{sec:nodewiseprior}. 
In contrast, our algorithm, as stated in \Cref{cor:p2p}, requires $O(\log^2 n)$ rounds but notably achieves message complexity independent of $m$ and improved node-wise message complexity.  The comparison of our implication in the \PCONGEST{} model with that of the result of \cite{gotte2023time} is also presented in \Cref{tab:p2p}. A key remaining open question is whether it is possible to attain the optimal $O(\log n)$ rounds in the \PCONGEST{} model with a message complexity of $O(n \cdot \mbox{poly}(\log n))$.

\begin{table}[htb]
    \centering

    \renewcommand{\arraystretch}{1.6}
    \setlength{\tabcolsep}{6pt} 
    
     \caption{Improvements in \PCONGEST{} model.}   \label{tab:p2p}
     \vspace{5pt}
 
    \begin{tabular}{c|c|c|c|c}
        \hline \hline

        \multirow{2}{*}{Reference} & \multirow{2}{*}{Rounds} & \multicolumn{2}{c|}{Message complexity} & \multirow{2}{*}{Target topology}\\
        \cline{3-4}
        &  & Total & Node-wise &  \\

        \hline \hline 
      
        \cite{gotte2023time}  & $O(\log n)$ & $\Theta(m\log^2 n)$ & $O\left(\deg_G(v)\cdot\frac{\log^3 n}{\log \log n}\right)$ & \WFT/ $O(1)$-degree expander\\
         
        \hline
         
        \Cref{cor:p2p}   & $O(\log^2 n)$ & $O(n\log n)$ & $O\left(\deg_G(v)+\log n\right)$ & \WFT\\

         \hline
         
        \Cref{cor:p2p}   & $O(\log^2 n)$ & $O(n\log^2 n)$ & $O\left(\deg_G(v)+\frac{\log^3 n}{\log \log n}\right)$ & $O(1)$-degree expander\\
      
        \hline \hline 
    \end{tabular}
   
\end{table}
\subsection{Number of bits communicated}

For all of \Cref{thm:star}, \Cref{cor:gossip}, \Cref{thm:hybridwft}, and \Cref{cor:hybridwf}, the total number of bits communicated among the nodes in all protocols is $O(n \log^2 n)$, due to the use of the hashing techniques from \citet{king2015construction}. This bound on the communication complexity breaks the $\Omega(n \log^3 n)$  barrier of the {linear sketch} of \citet{ahn2012analyzing} which was used in the previous work \cite{dufoulon2024time}. 

It has been shown that  $\Omega(n \log^3 n)$ bits of communication are indeed necessary for several applications of the {linear sketch} of Ahn, Guha, and McGregor~\cite{ahn2012analyzing}, such as distributed and sketching spanning forest~\cite{nelson2019optimal} and connectivity~\cite{yu2021tight}. More concretely, in the distributed sketching model, the goal of the connectivity problem is to determine whether an $n$-node graph $G$ is connected, with each of the $n$ players having access to the neighborhood of a single vertex. Each player sends a message to a central referee, who then decides whether $G$ is connected. \citet{yu2021tight} established that for the referee to decide correctly with probability ${1}/{4}$, the total communication must be at least $\Omega(n \log^3 n)$ bits.

The ability to break this barrier stems from being able to communicate in both ways in multiple rounds as opposed to the one-round one-way setting described above. The optimality of the hashing technique from \citet{king2015construction} is much less well-known, and it remains open whether their communication complexity bound can be further improved. Any such improvement will likely lead to improvements in the $O(n \log^2 n)$ communication complexity bound in this paper as well as many other applications of the hashing technique.

\subsection{Related work}
Various studies have explored methods for transforming arbitrary connected graphs into specific target topologies, such as expanders and \wft{} \cite{angluin2005fast, gmyr2017distributed, gotte2019faster, gotte2023time, dufoulon2024time}. Angluin et al.~\cite{angluin2005fast} were among the first to address this problem, demonstrating that any connected graph $G$ with $n$ nodes and $m$ edges can be converted into a binary search tree with depth $O(\log n)$. Their algorithm requires $O(\Delta + \log n)$ rounds and $O(n(\Delta + \log n))$ messages, where $\Delta$ is the maximum degree of any node in the initial graph. The model they used allows each node to send only one message per round to a neighbor, and the resulting binary tree can be further transformed into other desirable structures like expanders, butterflies, or hypercubes.  If the nodes are capable of sending and receiving an $O(\Delta \log n)$ number of messages per round, i.e., in \PCONGEST{} model, there exists a deterministic algorithm that operates in $O(\log^2 n)$ rounds, as shown in \cite{gmyr2017distributed}. Recently, this has been improved to $O(\log^{3/2} n)$ rounds with high probability, as demonstrated in \cite{gotte2019faster} for graphs with $\Delta$  polylogarithmic in $n$.

\citet{gilbert2020dconstructor} developed a different approach by designing a distributed protocol that efficiently reconfigures any connected network into a desired topology—such as an expander, hypercube, or Chord—with high probability. Here a node can send messages to all their neighbors in a round, regardless of their degree, resulting in faster communication for higher-degree nodes. Their protocol operates in $O(\polylog{n})$ rounds, utilizing messages of size $O(\log n)$ bits per link per round and achieving a message complexity of $\tilde{\Theta}(m)$. 

\citet{gotte2023time} later introduced an algorithm for constructing a well-formed tree—a rooted tree with constant degree and $O(\log n)$ diameter—from any connected graph. Their protocol first builds an $O(\log n)$-degree expander, which can be further refined into the desired tree structure. The algorithm is optimal in terms of time, completing in $O(\log n)$ rounds, which aligns with the theoretical lower bound of $\Omega(\log n)$ for constructing such topologies from arbitrary graphs~\cite{gotte2023time}. However, the message complexity remains $\tilde{\Theta}(m)$, as nodes are required to send and receive $d\log n$ messages per round, where $d$ is the initial maximum degree. The key innovation in their approach is the use of short random walks to systematically improve the conductance of the graph, ultimately leading to the formation of robust expander structures. Very recently, \cite{dufoulon2024time} introduces $\GOSSIPr{}(\log^2 n)$ model and showed that a constant-degree expander can be constructed starting from any initial topology by spending $O(\log^5 n)$ rounds and with message complexity $O(n \log^5 n)$. This algorithm in \cite{dufoulon2024time} is the first protocol that achieves message complexity independent of $m$. Note that our result on \GOSSIPr{} model (i.e., \Cref{cor:gossip}) is a strict improvement over the result of \cite{dufoulon2024time} in terms of both round and message complexity.

The $\HYBRID(\alpha, \beta, \gamma)$ model, initially introduced by \cite{augustine2020hybrid} for studying shortest paths, was further explored by \cite{gotte2023time}, who showed that in the $\HYBRID(\log n, 1, \log^2 n)$ model, an arbitrary topology can be transformed into a well-formed tree within $O(\log n)$ rounds. The message complexity of their algorithm is $O(m + n \log^3 n)$. In contrast, our result in the $\HYBRID(\log n, 1, \log n)$ model (\Cref{thm:hybridwft}) achieves communication efficiency, albeit in $O(\log^2 n)$ rounds. 

Research on overlay construction extends well beyond simple foundational examples, mainly due to the inherently dynamic nature of real-world overlay networks, which are often impacted by churn and adversarial behaviors. This research can be categorized into two primary areas: self-stabilizing overlays and synchronous overlay construction algorithms. Self-stabilizing overlays, which locally detect and correct invalid configurations, are extensively surveyed by Feuilloley et al.~\cite{feldmann2020survey}. However, many of these algorithms lack definitive communication complexity bounds and provide limited guarantees for achieving polylogarithmic rounds~\cite{berns2013building, jacob2014skip}. On the other hand, synchronous overlay construction algorithms are designed to preserve the desired network topology despite the presence of randomized or adversarial disruptions, thereby ensuring efficient load balancing and generating unpredictable topologies under certain error conditions~\cite{augustine2015enabling, drees2016churn, augustine2018spartan, gotte2019faster}. A significant advancement in this area is made by Gilbert et al.~\cite{gilbert2020dconstructor}, who demonstrated how fast overlay construction can be maintained even in the presence of adversarial churn, assuming the network stays connected and stable for an adequate duration. Additionally, Augustine et al.~\cite{angluin2005fast} investigated graph realization problems, focusing on rapidly constructing graphs with specific degree distributions; however, their approach assumes the initial network is arranged as a line, which simplifies the task. The complexity of overlay construction increases when nodes have restricted communication capabilities, prompting research into the Node-Capacitated Clique (NCC) model, where each node can send and receive $O(\log n)$ messages per round~\cite{augustine2019ncc}. Within the NCC model, efficient algorithms have been developed for various local problems such as MIS, matching, coloring, BFS tree, and MST~\cite{augustine2019ncc}. Notably, Robinson~\cite{robinson2021fast} established that constructing constant stretch spanners within this model necessitates polynomial time. Similar complexities are encountered in hybrid network models that blend global overlay communication with traditional frameworks like \LOCAL\ and \CONGEST, where extensive communication abilities enable solving complex problems like APSP and SSSP effectively, though often with considerable local communication overheads \cite{augustine2020hybrid, chang2024universally, kuhn2020shortest, feldmann2020survey}.

\subsection{Organization}

In \Cref{sec:prelim}, we present the basic graph terminologies and tools.
In \Cref{sec:star}, we present our protocols in the \GOSSIPr{} model, proving \Cref{thm:star}.
In \Cref{sec:hybrid},  we present our protocols in the $\HYBRID$ model, proving \Cref{thm:hybridwft}.
In \Cref{app:model}, we discuss the tradeoffs between some complexity measures.
In \Cref{sec:expander}, we provide the technical details for constructing a constant-degree expander from a constant-degree overlay.
In \Cref{sec:nodewiseprior}, we analyze the overall and node-wise message complexities of existing protocols.
In \Cref{sec:tree-wft}, we provide the technical details for constructing a well-formed tree from a rooted tree. 


\section{Preliminaries}\label{sec:prelim}

A graph is defined as $G = (V, E)$, where 
$E \subseteq \binom{V}{2}$, as the edges are undirected.
 The graph does not allow self-loops or multi-edges. Let $n = |V|$ and $m = |E|$. The neighborhood of a vertex $v$ in $G$ is denoted as $N_G(v) := \{ u \in V \mid \{u, v\} \in E \}$, and the degree of a vertex $v$ in $G$ is defined as $\deg_G(v) := |N_G(v)|$. The maximum degree of the graph is represented as $\Delta(G) = \max_{v \in V} \deg_G(v)$. The distance $d(u, v)$ between any two nodes $u$ and $v$ is the number of edges in the shortest path between them. The diameter of a graph is the maximum distance between any two nodes in the graph. The set of connected components of $G$ is denoted as $\CC(G)$, and the number of connected components in $G$ is represented by $\cc(G)$. 

A star graph, denoted as $K_{1, n-1} = (V, E)$, has a distinguished node $v \in V$ such that an edge $e = \{u, w\} \in E$ exists if and only if $v \in e$. A tree $T$ is a connected acyclic graph. A rooted tree $T_v$ is a tree with a distinguished node $v$ serving as the root. The depth of a rooted tree $T_v$ is the maximum distance from the root $v$ to any other node in the tree. A \emph{well-formed tree} is defined as a rooted tree with a constant maximum degree and a depth of $O(\log n)$. 
A \emph{\satt} is defined as a rooted tree with $O(\log n)$ maximum degree and $O\left(\log n\right)$ depth. 

We assume that each node has an ID of length $O(\log n)$ bits. The ID of an edge is a concatenation of the node IDs with the smaller first. We use $\eta(T_x)$ to denote the maximum edge ID among all edges incident to nodes in $T_x$. 

\paragraph{Expander:} The volume of any subset $S \subseteq V$ is defined as $\vol(S) := \sum_{v \in S} \deg_G(v)$. The conductance of a subset $S \subseteq V$, where $|S| \neq 0$ and $|S| \neq |V|$, is given by 

$$\Phi_G(S) := \frac{|E(S, V \setminus S)|}{\min(\vol(S), \vol(V \setminus S))},$$

\noindent where $E(S, V \setminus S) := \left\{ \{u, v\} \in E \mid u \in S, v \in V \setminus S \right\}$ represents the set of edges between $S$ and its complement.

The conductance of the graph $G$ is defined as 

$$\Phi(G) := \min_{S \subseteq V, S \neq \emptyset, S \neq V} \Phi_G(S).$$

Informally, a graph is considered an expander if it has high conductance. The thresholds commonly used to define high conductance vary by context, including $1/n^{o(1)}$, $1/\polylog(n)$, and $1/O(1)$. In this paper, we define an expander as a graph with conductance of $1/O(1)$.


\paragraph{Broadcast-and-echo:} A basic distributed protocol to disseminate and gather information is \emph{broadcast-and-echo}. It is initiated by some node $x$ and messages are relayed in a BFS manner, with possible modifications to the messages down the broadcasting tree. Then this process reaches the leaves, leaf nodes echo with some messages back to their parents. Internal nodes  wait untill all the messages are gathered before sending a computed aggregated message to their parents. This process takes $O(D(T_x))$ rounds and $O(|T_x|)$ messages, where $T_x$ refers to the broadcasting tree in this process. 

More generally, in the \CONGEST{} model with bandwidth $B$ bits, a broadcast-and-echo initiated by $x$ in $T_x$ with a maximum message size of $S$ bits can be done in $O\left(\frac{S}{B}+D(T_x)\right)$ rounds and $O\left(\frac{S}{B}|T_x|\right)$ messages, via message pipelining.

\paragraph{Find any outgoing edge:} For any tree $T_x$ rooted at $x$, we call edges between $T_x$ and $V \setminus T_x$ outgoing. Linear sketch techniques used in previous works by \cite{ahn2012analyzing,jowhari2011tight, pandurangan2018fast, dufoulon2024time} of $O(\log^2 n)$ bits can be used to sample an outgoing edge with constant success probability. To save on message complexity, we instead use a subroutine from \cite{king2015construction} to find an arbitrary outgoing edge from $T_x$. 

At a high level, this protocol of \cite{king2015construction} uses similar observation to the well-known linear graph sketch that internal edges in a tree will contribute $0$ to the sum of degree, or XOR of edge IDs.  However, it breaks the well-known linear graph sketch~\cite{ahn2012analyzing} into two phases. First, it uses $O(\log n)$ bits to aggregate the parity of the number of edges that is hashed into each log-scale bracket (1,2,4,8, ...). Then, they show that with constant probability there is one log-scale bracket that has exactly one edge hashed to it. This step corresponds to guessing the suitable sampling probability for exactly one outgoing edge to be sampled. They finally spend another $O(\log n)$ bits to identify the identity of the edge by XORing the edge numbers that are in the identified bracket. This process takes four iterations of broadcast-and-echo with message size $O(\log n)$ bits.

For completeness, we describe the protocol \FindOut(x) initiated at node $x$, which returns an edge leaving $T_x$ with probability at least $1/16$. The version described here corresponds to FindAny-C(x) in \cite{king2015construction}. \FindOut(x) uses another protocol from \cite{king2015construction} \HPTestOut(x) that returns true with high probability if there is an edge leaving $T_x$ and false otherwise. \HPTestOut{} is always correct if true is returned and uses one broadcast-and-echo with message size $O(\log n)$ bits.

\vspace{5pt} \noindent $\FindOut(x)$:
\begin{enumerate}
    \item $x$ initiates $\HPTestOut(x)$ in $T_x$ and return $\emptyset$ if \HPTestOut{} returns false.
    \item Determine the identity of an edge with the following steps:
        \begin{enumerate}
            \item $x$ broadcasts a random pairwise independent hash function $h : [1,\eta(T_x)] \to [0,r]$ where $r=2^w > \sum_{v\in T_x} \deg(v)$ for some $w$ .
            \item each node $y$ hashes the ID of all edges incident to it and compute a $\log r$-bit binary vector $\vec{h}(y)$ such that $\vec{h}_i(y) := |\{e\mid y\in e \land h(e) < 2^i \}| \mod 2$. 
            \item The vector $\vec{h}(T) := \oplus_{y\in T} \vec{h}(y)$ is computed up the tree, in the broadcast-and-echo return to $x$. Then $x$ broadcasts $min = \min\{i \mid \vec{h}_i(T) = 1\}$.
            \item Each node $y$ computes $s(y) = \oplus\{e \mid y \in e  \land h(e) < 2^{min}\}$ and $s(T)=\oplus_{y\in T} s(y)$ is computed up the tree in the broadcast-and-echo and returned to $x$. Observe that if there is exactly one edge leaving $T_x$ with $h(e) < 2^{min}$, then $s(T)$ is its edge ID.
        \end{enumerate}
    \item $x$ can perform another broadcast-and-echo to check if $s(T_x)$ is indeed a valid edge ID and return $s(T)$ if the check succeeds and $\emptyset$ if the check fails.
\end{enumerate}

\begin{lemma}[\cite{king2015construction}]\label{lm:findoutgoing}
    If there is no edge leaving $T_x$, then \FindOut(x) return $\emptyset$. Otherwise, $\FindOut(x)$ returns an edge leaving $T_x$ with probability at least $1/16$, else it returns $\emptyset$. The algorithm uses worst-case $O(D(T_x))$ rounds  and $O(|T_x|)$ messages.
\end{lemma}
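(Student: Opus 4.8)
The plan is to separate the three claims---one-sided correctness, the constant success probability, and the deterministic cost bound---and to exploit the single structural fact that internal edges cancel under XOR. Concretely, for any level $i$ let $X_i$ denote the number of \emph{outgoing} edges $e$ (those with exactly one endpoint in $T_x$) with $h(e)<2^i$. If $e=\{u,v\}$ has both endpoints in $T_x$, then $e$ is counted in both $\vec h_i(u)$ and $\vec h_i(v)$ and hence contributes $1\oplus 1=0$ to $\vec h_i(T)=\bigoplus_{y\in T_x}\vec h_i(y)$; the same cancellation applies to $\id(e)$ inside $s(\cdot)$. Thus $\vec h_i(T)=X_i \bmod 2$ and $s(T)=\bigoplus\{\id(e): e \text{ outgoing},\ h(e)<2^{\min}\}$, so both quantities depend only on the outgoing edges. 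In particular, when $X_{\min}=1$ the single surviving term makes $s(T)$ equal to the ID of that unique outgoing edge.

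For correctness I would argue as follows. If $T_x$ has no outgoing edge, then by the one-sided guarantee of \HPTestOut{} (it never returns true unless an outgoing edge exists) Step~1 returns false and \FindOut{} outputs $\emptyset$ deterministically. If an outgoing edge does exist but the hashing phase fails to isolate one, the final broadcast-and-echo of Step~3 tests whether $s(T)$ is the ID of a genuine edge crossing $T_x$; since this check passes only for a real outgoing edge, \FindOut{} never returns a wrong answer and outputs $\emptyset$ on failure. This yields exactly the one-sided behaviour claimed in the statement.

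The heart of the argument is the $1/16$ success bound. I would reduce success to the event $\{X_{\min}=1\}$: by the observation above this makes $s(T)$ a valid outgoing-edge ID that passes Step~3. The key simplification is that for \emph{any} fixed level $\ell$, the event $\{X_\ell=1\}$ already implies $\{X_{\min}=1\}$: since $X_0\le X_1\le\cdots$ is non-decreasing, $X_\ell=1$ forces $X_i\in\{0,1\}$ for all $i\le\ell$, so the first level at which $X_i$ becomes odd has $X_i=1$, i.e.\ $X_{\min}=1$. Hence it suffices to exhibit one level $\ell$ with $\Pr[X_\ell=1]$ bounded below. Writing $p_\ell=\Pr[h(e)<2^\ell]=2^\ell/(r+1)$ and $\mu_\ell=k\,p_\ell$ with $k$ the number of outgoing edges, I would choose $w$ (allowed, since we only need $r=2^w>\sum_{v\in T_x}\deg(v)$) so that $r\ge 2\sum_{v\in T_x}\deg(v)\ge 2k$; then $\mu_0\le 1/2$ while the top level has $\mu_w=kr/(r+1)>1/2$, so by doubling there is a level $\ell$ with $\mu_\ell\in(1/4,1/2]$. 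Using only pairwise independence, Bonferroni gives $\Pr[X_\ell\ge1]\ge \mu_\ell-\binom{k}{2}p_\ell^2\ge \mu_\ell-\tfrac12\mu_\ell^2$ and $\Pr[X_\ell\ge2]\le\binom{k}{2}p_\ell^2\le\tfrac12\mu_\ell^2$, whence $\Pr[X_\ell=1]\ge \mu_\ell(1-\mu_\ell)>3/16$. Since \HPTestOut{} returns true \whp{} when an outgoing edge exists and uses independent randomness, the overall success probability is at least $(1-n^{-\Omega(1)})\cdot\tfrac{3}{16}>\tfrac1{16}$.

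Finally, the cost bound is deterministic: \FindOut{} performs a constant number (four) of broadcast-and-echo passes---\HPTestOut{}, broadcasting the $O(\log n)$-bit seed of $h$ and echoing $\vec h(T)$, broadcasting $\min$ and echoing $s(T)$, and the verification pass. Each message is $O(\log n)$ bits (the vector $\vec h(y)$ has $\log r=O(\log n)$ bits because $\sum\deg\le 2m=\poly(n)$), so each pass costs $O(D(T_x))$ rounds and $O(|T_x|)$ messages, giving the stated worst-case bounds regardless of the random outcome. I expect the main obstacle to be the success-probability step: one must choose the hash scale so that the expected number of surviving outgoing edges is $\Theta(1)$ and bounded away from both $0$ and $1$---which is exactly why a slightly generous range $r\approx 2\sum\deg$ is needed to force $\mu_0\le 1/2$---and then push the isolation estimate through with only pairwise (not full) independence, where the gap between the proven $3/16$ and the stated $1/16$ comfortably absorbs the \whp{} slack of \HPTestOut{} and the rounding in $p_\ell$.
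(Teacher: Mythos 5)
Your proof is correct, and it follows essentially the same route as the paper's source: the paper itself gives no proof of this lemma (it is imported from \cite{king2015construction}, with only the protocol and a two-phase intuition described in the preliminaries), and your analysis---internal-edge cancellation making $\vec{h}_i(T)$ the parity of the outgoing edges at scale $i$, then isolating a single outgoing edge at a scale whose expectation lies in $(1/4,1/2]$ using only pairwise independence and Bonferroni---is exactly that sketch made rigorous, with the resulting $3/16$ comfortably absorbing the \whp{} slack of \HPTestOut{} to give the stated $1/16$.

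The two places where you went beyond the paper's literal text are both genuinely needed and correctly handled: the factor-two slack $r \ge 2\sum_{v\in T_x}\deg(v)$ is not cosmetic (with the minimal admissible power of two, e.g.\ a star with $k=2^w-1$ outgoing edges, one gets $\mu_0>1/2$ and no scale lands in $(1/4,1/2]$, breaking the $\mu_\ell(1-\mu_\ell)$ bound), and Step~3's check must be read as verifying that \emph{exactly one} endpoint of the claimed edge lies in $T_x$---which one broadcast-and-echo can do, since each node knows its incident edge IDs---rather than merely that $s(T)$ names some existing edge, since an XOR of several outgoing-edge IDs could collide with an internal edge; this stronger reading is what yields the one-sided guarantee you claim.
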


\section{Star overlay construction in the \texorpdfstring{$\GOSSIPr$}{\GOSSIPr} model}
\label{sec:star}


We begin by describing the high-level approach underlying our algorithm for the \GOSSIPr{} model. We draw inspiration from the following existing techniques.

\paragraph{Boruvka-style cluster merging with efficient inter-cluster edge selection:}{
    We use a Boruvka-style cluster merging approach used in many prior works \cite{dufoulon2024time, gmyr2017distributed, angluin2005fast} while maintaining a simple and useful invariant. We start with each node being a single cluster. In each iteration, we select inter-cluster edges and merge the clusters joined by these edges. By ensuring a constant factor reduction in the number of clusters in each iteration, the process terminates in $O(\log n)$ iterations. The challenge here is how to quickly select an outgoing edge effectively (effectiveness measured by small round or message complexity). We adopt \Cref{lm:findoutgoing} to find an outgoing edge efficiently. This was not previously used in the overlay network construction context and is more efficient than the linear graph sketching technique used by \cite{dufoulon2024time}.
}

\paragraph{Selective merging to overcome long chains:}{
    Overall our protocol works by sampling an inter-cluster edge from each cluster and merging clusters joined by sampled edges. Merging can be potentially slow due to the large diameter when the inter-cluster edges form a long chain connecting many clusters. Therefore, we break this chain via a simple coin-flipping technique, where each cluster flips a coin and will only accept a request if the coins of the requesting and requested clusters satisfy a specific condition. This symmetry-breaking technique is used extensively in many parallel and distributed works in problems such as parallel list ranking \cite{cole1986deterministic} and distributed graph connectivity \cite{gazit1991optimal}.
}


\paragraph{Faster and simpler merging:}{
    A key difference between our protocol and that of \cite{dufoulon2024time} is that we maintain a much simpler and useful invariant (maintaining a star in each cluster) that allows us to aggregate information in a cluster and perform merging among clusters much faster.
}


\paragraph{Star overlay construction protocol:}{
    We describe our protocol \MergeStar{} to construct a star overlay in the \GOSSIPrOne{} model which proves \Cref{thm:star}. 
    
    The algorithm proceeds in $O(\log n)$ Boruvka-style phases \whp. In each phase, a constant factor of clusters is merged into other clusters to form a cluster for the next phase with constant probability. The algorithm starts with each node being a cluster and maintains the following invariant: at the end of each phase, every node in each cluster $S$ agrees on a leader $l(S)$. This is true initially since each node can be the leader of its own cluster. In other words, this invariant implies that each cluster will keep a star overlay topology.

    Denote the given input topology as $G=(V, E)$. Let $G_i = (V, E_i)$ be the topology of the overlay network at the end of phase $i$. Let $G_0 = \left(V, \emptyset\right)$, \ie we start with each node being an isolated node in the overlay network. We refer to a connected component $C = \left(S, E_i\cap \binom{S}{2}\right) \in \CC(G_i)$ as a cluster in $G_i$.\footnote{Sometimes we also loosely refer to $S$ as the cluster. This should not cause any confusion since a cluster $C$ in $G_i$ is induced by $S$.} An outgoing edge from the cluster $C$ is an edge in the input graph $G=(V, E)$ connecting a node in $S$ to a node outside $S$ \ie $\Out(C) := E_G(S, V\backslash S) := \left\{\{u, v\} \in E\mid u \in S \text{ and } v\in V \backslash S \right\}$ is the set of outgoing edges of the cluster $C$. 
    
    Each phase consists of three steps. In phase $i$, we start with the overlay $G_{i-1}$.
    \begin{enumerate}[itemsep=0pt, leftmargin=2em]
        \item \textit{Sampling Step:} Each cluster $C = \left(S, E_i\cap \binom{S}{2}\right)$ finds an outgoing edge $e$ from $\Out(C)$ and a color $\chi(S) \in \{\Blue, \Red\}$, and then sends a \textit{merging request} containing the sampled color $\chi(S)$ to the external node in the sampled edge. 
        \item \textit{Grouping Step:} Clusters who received \textit{merging requests} decide on which clusters to merge with based on the color and reply either with an \textit{accepting message} or a \textit{rejecting message}. 
        The purpose of the $ \{\Blue, \Red\}$-coloring is to prevent long chains of \textit{merging requests} slowing down the \textit{Merging Step}.
        \item \textit{Merging Step:} Clusters that agree on merging will perform this step to merge the clusters. Each node in these clusters must agree on a new leader to maintain the invariant. 
    \end{enumerate}

    \paragraph{Sampling step:}
    In this step, each cluster $C = \left(S, E_i\cap \binom{S}{2}\right)$ needs to sample an outgoing edge uniformly at random with constant probability. It will take $\Omega(\Delta)$ rounds if we let each node check if each neighbor is in the same cluster. 
    We will use the \FindOut{} protocol to reduce communication. Each broadcast-and-echo is replaced by each leave in the star sending one request to the leader $l(S)$ for the broad-casted information. This exploits the replying property of \GOSSIPr{}. No random walk or PUSH-style information-spreading like \cite{dufoulon2024time} is needed. After the \FindOut{} protocol, the leader finds an outgoing edge with constant probability. Then it sends a \textit{merging request} along the sampled edge to the destination. Note that this step works correctly with probability $1/16$ due to \Cref{lm:findoutgoing}, as long as $cc(G_i)\geq 2$.
    
    


    
    Additionally, to facilitate the grouping step,  the leader  $l(S)$  independently and uniformly samples a color $\chi(S) \in \{\Blue, \Red\}$ for the cluster $S$ and sends the $1$-bit information along with the \textit{merging request}.

    \paragraph{Grouping step:}{ 
    The node $v\in S'$ that received the \textit{merging request} will reply with the leader $l(S')$. Then each leader will send the request to other cluster leaders. Since each cluster can only initiate one \textit{merging request}, there will be in total $c(G_{i})$ {merging requests}. Thus, there can be cycles or long chains in the graph, which can affect merging speed. Thus, we need to break these chains. We do this by making each cluster leader accept a request if and only if it is \Red{} and the requesting cluster is \Blue{}.
    }

    \paragraph{Merging step:}{
    Note that after the Grouping step, the \textit{merging requests} viewed as edges among clusters will form a star with the \Red{} clusters as the centers. We can identify the merged clusters in $G_{i+1}$ with the \Red{} clusters in $G_i$. Thus, We can maintain the invariant in each cluster in $G_{i+1}$ by letting the leader of the \Red{} clusters become the new leader of the merged clusters in $G_{i+1}$. 
    
    Each \Red{} leader will reply with its own identifier to the \Blue{} leaders. Each \Blue{} cluster leader then broadcasts this new leader identifier to the \Blue{} cluster members by replying to the cluster members' request. Actions taken by different nodes in the merging step are summarized in \Cref{tab:merge}.

    \begin{table}[htbp]
        \centering
        \begin{tabular}{l|@{~~}l}
            \hline
            Type in $G_i$ & Actions in the Merging Step\\
            \hline
            Cluster member & Send leader update requests to its leader in $G_i$.\\
            \hline
            \Red{} leader $u$ & Reply to leader update requests with its own identifier.\\
            \hline
            \Blue{} leader $v$ & 
                \begin{tabular}{@{}l}
                    Received acceptance decision from $u$;\\
                    $\begin{cases}    
                        \text{Reply to leader update requests with $\id(u)$} & \text{if $v$ is accepted}\\
                        \text{Reply to leader update requests with $\id(v)$} & \text{if $v$ is rejected}\\
                    \end{cases}$\\
                \end{tabular}\\
            \hline
        \end{tabular}
        \caption{Summary of actions of different nodes in the Merging Step}
        \label{tab:merge}
    \end{table}

    After this, each node in $G_{i+1}$ will agree on the same leader (\ie the \Red{} leader), thereby maintaining the invariant.
    }
}

\paragraph{Analysis:}{
    We bound the number of phases that the algorithm takes before it terminates \whp.

    \begin{lemma}\label{lem:reduct}
        \label{lm:constantFactor} Let $i \in \mathbb{N}$ such that $\cc(G_i)\geq 2$. We have $\EE[\cc(G_{i+1})] \leq \frac{63}{64} \cc(G_i)$. 
    \end{lemma}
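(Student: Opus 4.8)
The plan is to show that in each phase a constant fraction of the clusters get absorbed into others in expectation, and then conclude by linearity of expectation. Concretely, I will track, for each cluster $C$ present in $G_i$, the event that $C$ is a \Blue{} cluster whose merging request is accepted by some \Red{} cluster during phase $i+1$. The number of clusters decreases by exactly the number of such successfully-merged \Blue{} clusters, so it suffices to lower bound the expected number of these events.

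First I would pin down the combinatorial structure of a single phase. Since $\cc(G_i) \geq 2$ and $G$ is connected, every cluster $C$ with vertex set $S$ satisfies $S \subsetneq V$ and hence $\Out(C) \neq \emptyset$; thus \Cref{lm:findoutgoing} guarantees that \FindOut, initiated at the leader $l(S)$, returns an outgoing edge of $C$ with probability at least $1/16$. By the grouping rule, a request sent by $C$ is accepted precisely when $C$ is \Blue{} and the cluster $C'$ (with vertex set $S'$) containing the far endpoint of its sampled edge is \Red{}; since an outgoing edge leaves $S$, we always have $C' \neq C$. The accepted requests form vertex-disjoint stars centered at \Red{} clusters, because a \Red{} cluster never merges into anything and each \Blue{} cluster issues at most one request. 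Consequently, if $Y$ denotes the number of \Blue{} clusters whose request is accepted, then $\cc(G_{i+1}) = \cc(G_i) - Y$ holds deterministically (each such \Blue{} cluster reduces the component count by exactly one, with no chaining).

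Next I would lower bound $\EE[Y]$. Fix a cluster $C$ and let $A_C$ be the event that \FindOut returns an outgoing edge, $B_C$ the event $\chi(S) = \Blue$, and $R_{C'}$ the event $\chi(S') = \Red$ for the (random) target cluster $C'$. These events are driven by independent randomness: $A_C$ depends only on the hash functions used inside \FindOut, while $B_C$ and $R_{C'}$ depend on the independent color coins of the two distinct clusters $C$ and $C'$. Writing the probability as a sum over the possible identities of the target cluster (so that the independent coin of each fixed target can be pulled out), I obtain
\[
\Pr[A_C \cap B_C \cap R_{C'}] = \frac{1}{2}\cdot\frac{1}{2}\cdot\Pr[A_C] \geq \frac{1}{4}\cdot\frac{1}{16} = \frac{1}{64}.
\]
Summing over the $\cc(G_i)$ clusters and applying linearity of expectation yields $\EE[Y] \geq \cc(G_i)/64$, whence $\EE[\cc(G_{i+1})] = \cc(G_i) - \EE[Y] \leq \tfrac{63}{64}\cc(G_i)$.

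I expect the only delicate point to be the independence argument in the last step: because the target cluster $C'$ is itself a random variable determined by the outcome of \FindOut, one must condition on that outcome (equivalently, sum over fixed targets $C''$) before extracting the factor $\Pr[\chi(S'') = \Red] = 1/2$, so as not to conflate the randomness of $C$'s coin, $C'$'s coin, and the edge-sampling. Everything else—the exact identity $\cc(G_{i+1}) = \cc(G_i) - Y$ and the guaranteed existence of an outgoing edge for every cluster—follows directly from the maintained star invariant and the connectivity of $G$.
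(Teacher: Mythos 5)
Your proof is correct and follows essentially the same route as the paper's: the paper defines the complementary indicator $X_C$ (cluster $C$ fails to merge), bounds $\EE[X_C] \leq 1 - \frac{1}{16}\cdot\frac{1}{4} = \frac{63}{64}$, and sums by linearity, which is exactly your argument with successes replaced by failures. Your added care about the star structure of accepted requests and about conditioning on the sampled target before extracting the independent color-coin probabilities makes explicit what the paper leaves implicit, but it is the same proof.
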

    \begin{proof}
        First, observe that in one phase, each cluster has done the sampling step and the grouping step which can affect the number of clusters at the end of the phase. 

        Let $X_C$ be the indicator random variable for the event that either $C$ fails to sample an outgoing edge or the \textit{requesting message} initiated by $C$ is \textbf{rejected}, for each cluster $C\in \CC(G_i)$. As $cc(G_i)\geq 2$, the leader of each cluster $C \in \CC(G_i)$ find an outgoing edge from $C$ with probability at least $1/16$. Moreover, note that the requesting message from cluster $C$ will be accepted with probability $1/4$. So, we have 
        
        $$\EE[X_C] \leq 1 - \frac{1}{16}\frac{1}{4} = \frac{63}{64}.$$ By linearity of expectation, we have 
\[\EE\left[\cc(G_{i+1})\right]=\EE\left[\sum_{C\in \CC(G_i)} X_C\right] = \sum_{C\in \CC(G_i)} \EE\left[X_C\right] \leq \frac{63}{64}\cc(G_i).\qedhere \]


    \end{proof}

    \begin{lemma}
        \label{lm:termination}
        The protocol \MergeStar{} terminates in $O(\log n)$ phases \whp.
    \end{lemma}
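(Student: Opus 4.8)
The plan is to turn the one-step expected contraction from \Cref{lem:reduct} into a high-probability termination bound by means of a suitably shifted potential function and Markov's inequality. The key structural observation is that the number of clusters is monotonically non-increasing across phases, since merging never splits a cluster; hence once $\cc(G_i)=1$ the overlay is a single star and the protocol has terminated, with $\cc(G_j)=1$ for all $j\ge i$. It therefore suffices to bound the first phase $t$ at which $\cc(G_t)=1$.

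First I would define the potential $\Phi_i := \cc(G_i)-1 \ge 0$ and let $\mathcal{F}_i$ denote the history of the process through phase $i$. The claim is that the contraction $\EE[\Phi_{i+1}\mid \mathcal{F}_i] \le \tfrac{63}{64}\Phi_i$ holds \emph{unconditionally}. When $\cc(G_i)\ge 2$ this follows from \Cref{lem:reduct}: from $\EE[\cc(G_{i+1})\mid \mathcal{F}_i]\le \tfrac{63}{64}\cc(G_i)$, subtracting $1$ and using $\tfrac{63}{64}\cc(G_i)-1 \le \tfrac{63}{64}\bigl(\cc(G_i)-1\bigr)$ gives the bound. When $\cc(G_i)=1$, both $\Phi_i$ and $\Phi_{i+1}$ vanish, so the inequality is trivial. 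Iterating over $i$ then yields
\[
\EE[\Phi_t] \le \left(\tfrac{63}{64}\right)^{t}\Phi_0 \le \left(\tfrac{63}{64}\right)^{t} n.
\]

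Next I would apply Markov's inequality. Since $\Phi_t$ is a non-negative integer-valued random variable, the protocol has failed to terminate after $t$ phases exactly when $\Phi_t\ge 1$, so
\[
\Pr[\text{not terminated after } t \text{ phases}] = \Pr[\Phi_t \ge 1] \le \EE[\Phi_t] \le \left(\tfrac{63}{64}\right)^{t} n.
\]
Taking $t = c\log n$ with $c$ a sufficiently large constant (concretely any $c > (k+1)/\log(64/63)$) makes the right-hand side at most $n^{-k}$ for any desired polynomial failure probability, establishing that \MergeStar{} terminates within $O(\log n)$ phases \whp.

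The main obstacle to watch is the conditioning at the boundary: \Cref{lem:reduct} is asserted only for $\cc(G_i)\ge 2$, so one must check that folding in the absorbing state $\cc=1$ does not destroy the contraction, which is precisely what the shift $\Phi_i=\cc(G_i)-1$ achieves. A secondary point worth stating explicitly is that, because $\Phi_t$ is integer-valued, the event $\{\Phi_t\ge 1\}$ coincides with non-termination and the crude Markov bound already suffices; the geometric decay supplies far more than a polynomial factor of slack, so no sharper concentration inequality is required.
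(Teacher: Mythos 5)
Your proposal is correct and follows essentially the same route as the paper's own proof: the shifted potential $\cc(G_i)-1$, the case split that makes the contraction hold unconditionally (with the absorbing state $\cc=1$ handled trivially), iterated expectation to get geometric decay, and a final Markov bound exploiting integrality. The only cosmetic difference is that you condition on the full history $\mathcal{F}_i$ rather than on $Y_{i-1}$ alone, which is if anything slightly more careful.
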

    \begin{proof}


Let $t = c \log n$ for some sufficiently large constant $c$ that we will determine later. Our goal is to show that $\cc(G_t) = 1$ \whp, implying that the protocol terminates in $O(\log n)$ phases \whp. Define $Y_i := \cc(G_i) - 1$, where $i$ is a non-negative integer. Initially, we have $Y_0 = n - 1$, and the process terminates in phase $i$ when $Y_i = 0$. We aim to demonstrate that $Y_t = 0$ \whp. Observe that $Y_0, \ldots, Y_t$ form a sequence of random variables that take non-negative integer values.

To achieve this, it suffices to show that $\EE[Y_i] \leq \frac{63}{64} \EE[Y_{i-1}]$ for every $i \in \mathbb{N}$. We start by establishing that $\EE[Y_i \mid Y_{i-1}] \leq \frac{63}{64} Y_{i-1}$ for each $i \in \mathbb{N}$.

Consider the case when $Y_{i-1} \geq 1$, i.e., $\cc(G_{i-1}) \geq 2$. Applying \Cref{lem:reduct}, we have:
$$
\EE[Y_i \mid Y_{i-1}] = \EE[\cc(G_i) \mid \cc(G_{i-1}) = Y_{i-1} + 1] - 1 \leq \frac{63}{64}(Y_{i-1} + 1) - 1 \leq \frac{63}{64} Y_{i-1}.
$$
On the other hand, if $i$ is such that $Y_{i-1} = 0$, then $Y_i = 0$. Thus, $\EE[Y_i \mid Y_{i-1}] \leq \frac{63}{64} Y_{i-1}$ holds trivially.
Hence, we can conclude:
$$
\EE[Y_i] = \EE[\EE[Y_i \mid Y_{i-1}]] \leq \EE \left[ \frac{63}{64} Y_{i-1} \right]=\frac{63}{64}\EE [Y_{i-1}] \quad, \text{for all } i \in \mathbb{N}.
$$
This implies:
$$
\EE[Y_t] \leq \left(\frac{63}{64}\right)^t \EE[Y_0] = \left(\frac{63}{64}\right)^t \cdot (n - 1).
$$

Since $t = c \log n$, we choose $c$ to be sufficiently large such that $\EE[Y_t] \leq \frac{1}{\poly(n)}.$ Applying Markov's inequality, we have $\Pr[Y_t \geq 1] \leq \frac{1}{\poly(n)}$. Since $Y_t$ only takes non-negative integer values, it follows that $Y_t = 0$ holds  \whp. Therefore, we conclude that the protocol terminates in $O(\log n)$ phases \whp.
    \end{proof}

Now we are ready to prove \Cref{thm:star}.

\star*
    
    \begin{proof}
        The correctness of this algorithm is obvious since a leader is maintained and known to all nodes in the clusters after each phase. Once the algorithm terminates, there will be only one cluster and all nodes will agree on a single leader. Therefore, at the end of the algorithm, we have constructed a star overlay network.

        By \Cref{lm:termination}, we know that the algorithm terminates in $O(\log n)$ phases \whp. Now we only need to check that it takes $O(1)$ rounds in each phase in the \GOSSIPrOne{} model to conclude that the algorithm terminates in $O(\log n)$ rounds with high probability. To be exact, we need seven rounds (described from the point of view of a cluster leader): $4$ rounds to run \FindOut{}; $1$ round to send the \textit{requesting message} and receive the new leader; $1$ round to resend the \textit{requesting message} to the cluster leaders and receive an \textit{accepting message} or a \textit{rejecting message}; and $1$ round to distribute the new leader identifier to the old cluster members. During this process, every cluster member just keeps sending requests to their old leader for the identifier of the new leader. 

        For $\GOSSIPr(b)$, where $b\in O\left(\log n\right)$, we can simulate one round of the above process with $O\left(\frac{\log n}{b}\right)$ rounds and arrive at our conclusion. 
        The $O\left(n \log n \cdot\max\left(\frac{\log n}{b}, 1\right)\right)$ messages \whp total message complexity follows from the restriction of the model that at most $O(n)$ messages are sent in each round.
    \end{proof}
}

\section{Well-formed tree overlay construction in the \HYBRID{} model}
\label{sec:hybrid}

In this section, we will describe a protocol for \wft\ (WFT) construction in the $\HYBRID(\log n, 1,\log n)$ model. Due to \Cref{lm:wft-expander} adapted from results in \cite{gotte2023time, dufoulon2024time}, we can convert a \wft{} overlay to a constant-degree expander overlay via only global communications in the $\HYBRID(\log n, 1, \log n)$ model with an additive $O(\log^2 n)$ round complexity and additive $O(\log^2 n)$ messages for each node. 
Therefore, we will focus on describing the WFT construction protocol.

\subsection{Algorithm}\label{subsect:algo}

Now we describe our protocol \hybridWFT{} to build the \wft{} to prove \Cref{thm:hybridwft}. The overall structure of the algorithm is similar to that in \Cref{sec:star}.
At the start, each node is a cluster, \ie $G_0$ consists only of isolated nodes. The algorithm proceeds in phases where in each phase a constant fraction of the clusters are merged in expectation. At the end of $O(\log n)$ phases, there will be only one cluster (per connected component of the input graph) \whp. We maintain the following invariant in each cluster after each phase: the subgraph induced by each cluster is a \satt{} ($O(\log n)$-degree, $O\left(\log n\right)$-depth) that spans the cluster. More specifically, each node in the cluster knows its parent and children, as well as the root of the \satt{}. 

In each phase, there will be 3 major steps. Unless mentioned otherwise, all communications are done over the global channel. 

\paragraph{Sampling step:}{
    This step aims to sample an outgoing edge from each cluster. We run the protocol $\FindOut(r)$ from the root $r$ and find an outgoing edge (from the cluster) with constant probability. This takes $O(D(T_x))=O\left(\log |T_x|\right)$ rounds and $O(|T_x|)$ messages.
    Next, the root node samples a random color $\chi \in \{\Red, \Blue\}$. Finally, $r$ broadcasts the selected outgoing edge and the color in the cluster.
}

\paragraph{Grouping step:}{
    The selected node in each cluster will send a merging request along the selected outgoing edge. The merging request contains the color and the root identifier of the requesting cluster. Since we sample from the local edges, all merging requests will be over the local edges. Then each node receiving a request accepts the request if and only if the accepting cluster is \Red{} and the requesting cluster is \Blue{}. The accepting node sends an accepting message with the identifier of the accepting node via the local edge. 

    To improve the probability of low local communication, we will need to reduce the effective neighbors of \Blue{} nodes. We do this via the following procedure. Each node $v$ receiving a request will compute a matching over its rejected neighbors. Then $v$ will send the \textit{rejecting message} along with its matched \textit{regrouping cluster} to each rejected requesting node. 
    Then each rejected node will send a \textit{regrouping message} to its matched \textit{regrouping cluster} over the global network. The matched pairs will exchange their cluster leader identifier to decide on a new leader based on who has a larger identifier.

    \begin{lemma}
        \label{lm:forest}
        The cluster graph $H^C=(\CC(G_i), E^C)$ is a forest, where $A, B\in \CC(G_i)$ are adjacent in $H^C$ if and only if one accepts the other or one is matched with the other. Moreover, each tree in $H^C$ has a diameter at most $3$.
    \end{lemma}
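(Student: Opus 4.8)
The plan is to classify the edges of $H^C$ into two types and bound, for each cluster, how many edges of each type it can be incident to. Recall that every cluster issues exactly one merging request along its sampled outgoing edge. Consequently every edge of $H^C$ is either an \emph{acceptance edge}, created when a \Blue{} cluster's request is accepted by a \Red{} cluster, or a \emph{matching edge}, created when two clusters whose requests were both rejected at a common receiving node $v$ are paired by the matching that $v$ computes. I would first record the asymmetry forced by the colour rule: a request is accepted only if the accepter is \Red{} and the requester is \Blue{}, so every acceptance edge joins a \Blue{} requester to a \Red{} accepter.

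Next I would establish the key counting fact: each cluster $C$ is incident to at most one edge that \emph{originates from its own request}. Indeed, $C$ sends a single request, which is either accepted (one acceptance edge leaving $C$) or rejected (then $C$ is paired with at most one partner, one matching edge). The crucial observation is that the matching a node $v$ computes pairs up $v$'s rejected requesters among themselves and never makes $v$'s own cluster an endpoint; hence the cluster of $v$ receives no matching edge from this computation. From this I would deduce two structural facts: (i) every \Blue{} cluster has degree at most $1$ in $H^C$, since it can never be an accepter (accepters are \Red{}) and its only possible incident edge is the one coming from its own request, so each \Blue{} cluster is a leaf or isolated; and (ii) the matching edges form a matching, because a cluster's single request places it in the rejected set of exactly one receiving node, which pairs it with at most one partner.

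With these facts in hand I would describe the possible components. The only clusters that can have degree at least $2$ are \Red{} clusters, and the neighbours of a \Red{} cluster $R$ are \Blue{} leaves (reached by acceptance in-edges, which dead-end) together with at most one matching partner (coming from $R$'s own rejected request). If that partner is \Blue{} or absent, the component is a star centred at $R$ with \Blue{} leaves, of diameter at most $2$; if the partner is another \Red{} cluster $R'$, then $R'$ likewise has only \Blue{} leaves plus its single matching edge, which is exactly the edge back to $R$. Thus the backbone of any component is at most one matching edge between two \Red{} clusters, each decorated with \Blue{} leaves, so the longest possible path has the form $(\text{leaf})$–$R$–$R'$–$(\text{leaf})$, giving diameter at most $3$. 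For acyclicity I would note that two \Red{} clusters are joined by at most one edge (a matching edge, since no acceptance edge can join two \Red{} clusters) and that the matching edges form a matching, while all \Blue{} clusters are leaves; hence no cycle can arise and $H^C$ is a forest.

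I expect the main obstacle to be pinning down exactly which edges each cluster can carry — in particular the observation that the matching computed at a receiving node never makes that node's own cluster an endpoint, and that a cluster's single request forces it into at most one matching edge. Once these are made precise, both the forest property and the diameter bound follow from the short case analysis above, so the remaining work is purely bookkeeping over the colour conditions.
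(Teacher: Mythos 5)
Your proposal is correct and follows essentially the same route as the paper: it splits $E^C$ into acceptance edges and matching edges, uses the facts that each cluster sends exactly one request and that acceptance goes only from \Blue{} to \Red{} to show acceptance edges form stars centered at \Red{} clusters while matching edges form a matching, and finishes with the same three-way case analysis on the colors of a matching edge's endpoints to get diameter at most $3$. Your explicit degree bookkeeping (\Blue{} clusters are leaves; \Red{}--\Red{} edges arise only from the matching) is a slightly more careful packaging of the acyclicity argument than the paper's alternating-color-cycle contradiction, but it rests on identical structural facts.
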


    \begin{proof}
        First, we call the edges from the accepted request $E_a^C$ and the edges from the matching $E_m^C$. Then $E^C = E_a^C \cup E_m^C$. 
        Since each cluster only initiates one request and we only accept requests from \Blue{} to \Red{}, $E_a^C$ induces a forest. Otherwise, there will be a cycle which consists of clusters of alternating colors. Then each cluster will be accepting some requests since each cluster only sends one request. However, this is impossible due to our acceptance rule because each accepting cluster (\Red{}) will not be accepted, and each accepted cluster (\Blue{}) will not be accepting any cluster. Since $E_m^C$ comes from matching the rejected nodes, and all rejected clusters are not connected by accepting edges, $H^C$ is a forest.

        For the diameter, see that the connected components induced by $E_a^C$ are stars centered at a \Red{} cluster. This is because each \Blue{} cluster rejects all requests and each request from a \Red{} cluster is rejected. Then, we perform a case analysis for the components connected by $E_m^C$. Let $\{A,B\}\in E_m^C$. If $A, B$ are both rejected \Blue{} clusters, then this component has diameter $1$. If $A, B$ are both rejected \Red{} clusters, then this component has a diameter at most $3$. If $A$ is \Red{} and $B$ is \Blue{}, this component has diameter at most $2$.
    \end{proof}
    Define $H = (V, E(G_i)\cup E_a \cup E_m)$, where $E_a$ are edges from the \textit{accepting messages} and $E_m$ are edges from the \textit{regrouping messages}.
    We call each connected component in $H$ a grouped cluster in phase $i$. Note that each grouped cluster has agreed on a unique leader. We will now perform the merging step to transform each grouped cluster into a \satt{}. 

    To illustrate the grouping step clearly, we show a possible execution of the grouping step in \Cref{fig:hybrid_group}.
    
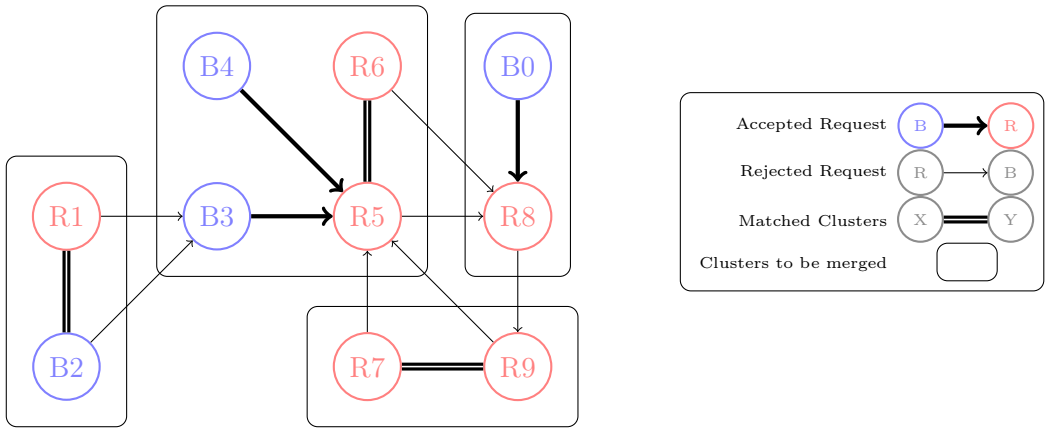
\begin{figure}[htbp]
    \centering
    \begin{tikzpicture}[node distance={20mm}, blue/.style = {draw, thick, circle, blue!50}, red/.style = {draw, thick, circle, red!50}, gray/.style = {draw, thick, circle, gray!90}]
        \node[red] (1) {R1};
        \node[blue] (2) [below of=1] {B2};
        \node[blue] (3) [right of=1] {B3};
        \node[blue] (4) [above of=3] {B4};
        \node[red] (5) [right of=3] {R5};
        \node[red] (6) [above of=5] {R6};
        \node[red] (7) [below of=5] {R7};
        \node[red] (8) [right of=5] {R8};
        \node[red] (9) [below of=8] {R9};
        \node[blue] (10) [above of=8] {B0};

        \draw[very thick, double] (1)--(2);
        \draw[->] (1)--(3);
        \draw[->] (2)--(3);
        
        \draw[ultra thick, ->] (3)--(5);
        \draw[ultra thick, ->] (4)--(5);
        \draw[ultra thick, ->] (10)--(8);
        \draw[very thick, double] (6)--(5);
        \draw[->] (7)--(5);
        \draw[->] (9)--(5);
        \draw[->] (5)--(8);
        \draw[->] (8)--(9);
        \draw[->] (6)--(8);
        \draw[very thick, double] (7)--(9);
        
        \draw[rounded corners] (-0.8, -2.8) rectangle (0.8, 0.8) {};
        \draw[rounded corners] (1.2, -0.8) rectangle (4.8, 2.8) {};
        \draw[rounded corners] (3.2, -2.8) rectangle (6.8, -1.2) {};
        \draw[rounded corners] (5.3, -0.8) rectangle (6.7, 2.7) {};

        \matrix [rounded corners, draw, above left, node distance={12mm}, font=\tiny] at (13,-1) {
            \node[] {Accepted Request};&
            \node[blue] (10) [] {B};
            \node[red] (11) [right of=10] {R};
            \draw[ultra thick, ->] (10)--(11); \\ \vspace{100pt}
            \node[] {Rejected Request};&
            \node[gray] (12) [] {R};
            \node[gray] (13) [right of=12] {B};
            \draw[->] (10)--(11); \\ 
            \node[] {Matched Clusters};&
            \node[gray] (14) [] {X};
            \node[gray] (15) [right of=14] {Y};
            \draw[very thick, double] (10)--(11); \\
            \node[] {Clusters to be merged};&
            \draw[rounded corners] (0,0) rectangle (0.8,0.5) {}; \\
        };
    \end{tikzpicture}
    \caption{A possible grouping step}
    \label{fig:hybrid_group}
\end{figure}
    
}

\paragraph{Merging step:}{
    Each node in a cluster that acts on behalf of the leader (to reply to requests) will inform its leader that it is the new leader of a grouped cluster.
    Each new leader will initiate a re-rooting process via a breadth-first search style broadcast, where each node will change its parent and children according to their distance from the new leader.
    Those requesting nodes that received an accepting message will re-root the requesting cluster towards the new leader by relaying this broadcast in the requesting cluster. 

    Let $v$ be the new leader of the grouped cluster. We now have a tree $T_v$ rooted at $v$ after re-rooting. $T_v$ has depth $O(\log n)$, since each original cluster has depth $O(\log n)$ and $H^C$ has diameter at most $3$ due to \Cref{lm:forest}. However, this tree might have a maximum degree up to $O(\Delta)$, due to the accepting edges \ie edges in $E_a$. 
    Therefore, we will perform the following transformation similar to the merging step in \cite{gmyr2017distributed} to maintain the invariant. However, we made some crucial adaptation to the pointer jumping step to reduce message complexity at the cost of introducing randomness.  
    First, we transform the tree into a child-sibling tree. Each node $v$ will sort its children in some arbitrary order and then attach itself at the head of this order. 
    Then for each child $u$ in this order, $v$ will send the previous and next node in the order. The last node will receive no next node. 
    In this way, each node keeps at most one child and one sibling. 
    By viewing the sibling as a child, we have constructed a binary tree. 
    Then, we can proceed with the same Euler Tour technique to turn this into a cycle of virtual nodes.
    Finally, we perform \RCtT{} described in \Cref{rc2t} to turn this cycle of virtual nodes into a tree with $O(\log n)$ degrees and $O(\log n)$ depth.


    The above steps to construct a \satt{} after re-rooting are described in more detail in \Cref{sec:tree-wft}, where we show that running this process for $O(\log n)$ times can be done in $O(\log^2 n)$ rounds with $O(n \log n)$ messages. Moreover, each node $v$ uses at most $O(\deg_G(v) + \log n)$ messages.
    
}

\paragraph{Post-processing:} After all the phases terminates, we get a cluster with a \satt{} overlay. We can now run one iteration of the deterministic \wft{} construction in \Cref{sec:tree-wft} to turn this \satt{} to a \wft{} with additive $O(\log n)$ rounds and $O(n\log n)$ messages.

\subsection{Analysis}
    
\paragraph{Round complexity:}{
    We first show that the process ends in $O(\log n)$ phases \whp, and then we are left to show that each phase takes $O(\log n)$ rounds. After that, we conclude that the whole algorithm terminates in $O\left(\log^2 n\right)$ rounds with an overlay network topology of a well-formed tree \whp. This is because the post processing takes only $O(\log n)$ rounds.

    \begin{lemma}
        \label{lm:hybrid termindation}
        The above protocol terminates in $O(\log n)$ phases \whp.
    \end{lemma}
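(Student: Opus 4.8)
The plan is to follow exactly the two-step strategy of \Cref{lm:termination}: first establish a per-phase constant-factor shrinkage of the cluster count in expectation, and then convert this into a high-probability termination bound via a Markov/iteration argument.

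For the first step, I would prove the exact analogue of \Cref{lem:reduct}: whenever $\cc(G_i)\ge 2$, we have $\EE[\cc(G_{i+1})]\le \frac{63}{64}\,\cc(G_i)$. Fix a phase $i$ with $\cc(G_i)\ge 2$ and, for each cluster $C\in\CC(G_i)$, let $X_C$ be the indicator that $C$ \emph{survives} as its own cluster into phase $i+1$, i.e. that $C$ either fails to sample an outgoing edge in the Sampling step or has its merging request rejected in the Grouping step. The key observation is that the Merging step and, crucially, the matching-based regrouping can only fuse clusters together and hence never increase the cluster count; therefore $\cc(G_{i+1})\le \sum_{C\in\CC(G_i)} X_C$, since each surviving cluster contributes at most one cluster to $G_{i+1}$. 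It then suffices to bound $\EE[X_C]$. By \Cref{lm:findoutgoing}, and because $\cc(G_i)\ge 2$ guarantees an outgoing edge exists, $\FindOut$ returns one with probability at least $1/16$. Conditioned on a sampled outgoing edge, the request is accepted exactly when the requesting cluster is $\Blue$ and the target cluster is $\Red$; since each root draws its color independently and uniformly, this occurs with probability $\frac14$, independently of the sampling success. Hence $\EE[X_C]\le 1-\frac{1}{16}\cdot\frac14=\frac{63}{64}$, and linearity of expectation (which requires no independence across clusters) gives $\EE[\cc(G_{i+1})]\le\frac{63}{64}\,\cc(G_i)$.

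For the second step I would reuse the argument of \Cref{lm:termination} essentially verbatim. Setting $Y_i:=\cc(G_i)-1$, the shrinkage bound yields $\EE[Y_i\mid Y_{i-1}]\le \frac{63}{64}Y_{i-1}$, with the case $Y_{i-1}=0$ holding trivially since the process has already reached one cluster per component. Iterating gives $\EE[Y_t]\le \bigl(\tfrac{63}{64}\bigr)^{t}(n-1)$; choosing $t=c\log n$ for a sufficiently large constant $c$ makes $\EE[Y_t]\le 1/\poly(n)$. Markov's inequality then gives $\Pr[Y_t\ge 1]\le 1/\poly(n)$, and since $Y_t$ is a non-negative integer this forces $Y_t=0$, i.e. a single cluster per connected component, \whp, so the protocol terminates in $O(\log n)$ phases \whp.

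The main obstacle is the first step, specifically justifying that the matching-based regrouping never inflates the cluster count so that the clean $\sum_C X_C$ upper bound survives, and checking that the two randomness sources—$\FindOut$ succeeding and the $\{\Red,\Blue\}$-coloring producing an accepted request—combine to give the factor $\frac{63}{64}$. Note that the additional merges from matching only help the bound; although they complicate an exact count of $\cc(G_{i+1})$, they are harmless for the upper bound we need. The structural facts of \Cref{lm:forest} (that the grouped clusters form low-diameter trees) ensure the matching does not reconnect clusters in a way that would break the counting, but they are not actually needed for the shrinkage bound itself.
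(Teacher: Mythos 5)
Your proposal is correct and takes essentially the same approach as the paper: the paper's own proof is a one-liner deferring to \Cref{lem:reduct} and \Cref{lm:termination}, justified—exactly as you argue—by the observation that the matching-based regrouping of rejected clusters can only decrease the cluster count further, so the bound $\EE[\cc(G_{i+1})]\le \frac{63}{64}\,\cc(G_i)$ and the subsequent iterated-expectation/Markov argument carry over unchanged. Your write-up merely makes explicit (via the inequality $\cc(G_{i+1})\le\sum_{C} X_C$ replacing the equality of the star-overlay setting) what the paper leaves implicit.
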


    \begin{proof}
        The proof is similar to that of \Cref{lm:constantFactor} and \Cref{lm:termination}, as the additional matching step of rejected clusters only improves the cluster number reduction.
    \end{proof}
    
    \begin{lemma}
        The above algorithm takes $O(\log n)$ rounds in each phase.
    \end{lemma}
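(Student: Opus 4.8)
The plan is to show that each phase of \hybridWFT{} runs in $O(\log n)$ rounds by accounting for the cost of each of the three steps (Sampling, Grouping, Merging) plus the tree-normalization subroutine, and observing that every step is dominated by the $O(\log n)$ depth bound guaranteed by the \satt{} invariant. First I would bound the Sampling step: by the invariant, each cluster induces a \satt{} of depth $O(\log n)$, so the broadcasting tree $T_r$ rooted at the cluster root satisfies $D(T_r) = O(\log n)$. By \Cref{lm:findoutgoing}, running $\FindOut(r)$ costs worst-case $O(D(T_r)) = O(\log n)$ rounds, and the subsequent broadcast of the sampled edge and the color $\chi$ down the tree is a single broadcast-and-echo, again $O(\log n)$ rounds. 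The one subtlety here is that \FindOut{} and the broadcasts run over the local channel along \satt{} edges, where the local capacity is $O(1)$; since each node has $O(1)$ tree-children to relay to (the \satt{} has $O(\log n)$ max degree, but the broadcast traffic per edge per round is $O(1)$), the local capacity constraint is respected, so no message is dropped.

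Next I would handle the Grouping step. The merging request travels along a single selected outgoing edge (one local message), the acceptance/rejection reply is one message, and the matching computation is local at each receiving node. The \emph{regrouping} messages are sent over the global channel, but each node sends at most $O(1)$ such messages, well within the $O(\log n)$ global capacity; hence this exchange takes $O(1)$ rounds. The key structural fact I would invoke is \Cref{lm:forest}: the cluster graph $H^C$ is a forest of components of diameter at most $3$, so identifying the new leader of each grouped cluster and propagating it to the $O(1)$ constituent clusters takes only $O(1)$ rounds of global communication.

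For the Merging step, the re-rooting broadcast runs over a tree $T_v$ of depth $O(\log n)$ (since each original cluster has depth $O(\log n)$ and $H^C$ has diameter $\le 3$, by \Cref{lm:forest}), costing $O(\log n)$ rounds. The child-sibling transformation and the Euler-tour construction are each a constant number of broadcast-and-echo passes over this depth-$O(\log n)$ tree, so $O(\log n)$ rounds each. The genuinely load-bearing citation is that the \RCtT{} procedure of \Cref{sec:tree-wft} turns the resulting cycle of virtual nodes into an $O(\log n)$-degree, $O(\log n)$-depth \satt{} in $O(\log n)$ rounds; the excerpt has already asserted that running this process costs $O(\log^2 n)$ rounds across all $O(\log n)$ phases, i.e.\ $O(\log n)$ rounds per phase, so I would simply cite that bound.

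The main obstacle I anticipate is not the depth arithmetic but verifying that the \HYBRID$(\log n, 1, \log n)$ capacity constraints are never violated during any of these $O(\log n)$-round steps --- in particular, that the potentially high-degree accepting nodes (whose tree-degree can reach $O(\Delta)$ transiently in the re-rooting tree before normalization) do not need to send or receive more than $O(1)$ local or $O(\log n)$ global messages in a single round. I would argue this by noting that all the high-degree structure is handled by \emph{pipelined} broadcast-and-echo and by the child-sibling flattening, so that even though a node may have many children across the whole protocol, in each individual round its per-edge traffic is $O(1)$ locally and its global traffic is $O(\log n)$; this is exactly where the node-wise complexity analysis and the round analysis interact, and I would lean on the detailed accounting deferred to \Cref{sec:tree-wft}. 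Combining all steps, each phase is $O(\log n)$ rounds, which is the claim.
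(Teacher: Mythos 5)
Your decomposition (FindOut and intra-cluster broadcasts cost $O(\log n)$ by the \satt{} depth invariant; grouping costs $O(1)$; re-rooting costs $O(\log n)$; \RCtT{} costs $O(\log n)$) is exactly the paper's accounting, and the final bound is right. However, two of your supporting claims are wrong as stated, and the first one matters in this model. You assert that \FindOut{} and the cluster broadcasts ``run over the local channel along \satt{} edges'' and justify capacity compliance via the $O(1)$ per-neighbor local capacity. This cannot be the argument: \satt{} edges are \emph{overlay} edges created by earlier merges, not edges of the input graph $G$, so the local channel (which exists only on $G$'s edges) cannot carry this traffic at all. The paper explicitly stipulates that all communication is over the global channel unless stated otherwise. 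The correct capacity argument is different: the \satt{} invariant bounds the maximum degree by $O(\log n)$, which matches the global capacity $\gamma = O(\log n)$ of $\HYBRID(\log n, 1, \log n)$, so a node can forward a broadcast to all its tree children in one round and the broadcast finishes in depth $= O(\log n)$ rounds. The per-edge local capacity is the right tool in a different place --- it is what lets a high-degree \emph{accepting} node talk to all its requesters and, in the child-sibling transformation, to the many children attached via local (accepting) edges in $O(1)$ rounds; this is how the paper resolves the ``transiently high degree'' concern you raise at the end.

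The second misstatement: you say the new leader is propagated ``to the $O(1)$ constituent clusters.'' A grouped cluster can contain far more than $O(1)$ constituent clusters --- a single \Red{} cluster may accept arbitrarily many \Blue{} requests, forming a star in $H^C$ with $\Omega(n)$ leaves. What \Cref{lm:forest} gives is that $H^C$ has \emph{diameter} at most $3$, so the leader information crosses only $O(1)$ inter-cluster hops; disseminating it inside each constituent cluster is exactly the re-rooting broadcast, which costs $O(\log n)$ rounds and which you do account for separately. So your conclusion survives, but both justifications should be repaired: replace the local-channel claim with the degree-versus-global-capacity argument, and replace ``$O(1)$ constituent clusters'' with ``$O(1)$ diameter of the cluster graph.''
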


    \begin{proof}
        First, observe that all the following tasks take $O(\log n)$ rounds due to broadcasting $O(\log n)$ bits in a \satt{}: performing \FindOut{}, broadcasting the sampled outgoing edge and color, informing the leader that it is the leader of the grouped cluster and re-rooting the grouped cluster. 
    
        Then we look at the rest of the operations one by one. It takes $O(1)$ rounds to send and accept a request. Notably, accepting takes $O(1)$ rounds since accepting messages are sent over the local network and each node can send $O(1)$ message to each of its neighbors. Moreover, rejecting with \textit{regrouping clusters} takes $O(1)$ rounds for the same reason since the matching is computed locally. Comparing the new leader over the \textit{regrouping messages} also takes $O(1)$ round. 
        
        Transforming the grouped clusters into a child-sibling tree also takes $O(1)$ round. This is because, in the grouped clusters, each node only has $O(1)$ children in the overlay network that they have to contact over the global network. For the child-sibling tree transformation, only $O(1)$ message per child is needed. Therefore, each node only needs to send $O(1)$ messages which is below the global capacity. Over the local network, each node is allowed to send $O(1)$ messages to each of its neighbors, which is enough for our tasks. 
    
        The construction of the virtual Euler tour requires $O(1)$ rounds since the child-sibling tree is a constant degree tree. Lastly, \RCtT{} on the Euler tour takes $O(\log n)$ rounds to complete. 

        Overall, all steps are performed within $O(\log n)$ rounds and there are $O(1)$ steps in a phase. Therefore, each phase takes $O(\log n)$ rounds.
    \end{proof}
    
}

\paragraph{Message complexity:}{
    First, we will show that the total number of messages per phase is $O(n)$. Then the overall message complexity will be $O(n \log n)$ \whp since the protocol terminates in $O(\log n)$ phases \whp according to \Cref{lm:hybrid termindation} and the post-processing takes $O(n\log n)$ messages.

    \begin{lemma}
        \label{lm:hybrid-message-per-phase}
        Each phase in the above protocol takes $O(n)$ messages.
    \end{lemma}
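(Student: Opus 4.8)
The plan is to charge every message sent during a single phase to one of two ledgers, a \emph{per-cluster} ledger and a \emph{per-request} ledger, and to argue that each is $O(n)$. The first observation is that the clusters at the start of the phase are exactly the connected components of $G_{i-1}$, so they partition $V$; likewise the grouped clusters forming the components of $H$ partition $V$. Consequently, any operation that uses $O(|C|)$ messages on a cluster (or grouped cluster) $C$ contributes $O\big(\sum_C |C|\big)=O(n)$ in total. The second observation is that each cluster initiates exactly one merging request, so the number of requests in the phase is at most $\cc(G_{i-1})\le n$; hence any cost that is $O(1)$ per request---a reply, a rejection, a regrouping message---also sums to $O(n)$.

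For the sampling step, \Cref{lm:findoutgoing} gives that $\FindOut(r)$ uses $O(|T_r|)$ messages, and broadcasting the selected outgoing edge together with the color $\chi\in\{\Red,\Blue\}$ down the \satt{} is a broadcast-and-echo costing $O(|T_r|)$ messages; both fall on the per-cluster ledger and total $O(n)$. For the grouping step, the single merging request per cluster, the accept/reject reply, the rejecting message carrying the matched regrouping cluster, the regrouping message over the global channel, and the leader-identifier exchange between matched pairs are each $O(1)$ per request, so they fall on the per-request ledger and total $O(n)$. A node that receives many requests computes its matching over rejected neighbors locally (no messages are sent for this); although such a node may emit $\Omega(\Delta)$ rejecting messages, each is charged against a distinct incoming request, so the global total stays $O(n)$.

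For the merging step, informing each new leader is $O(1)$ per original cluster, and the re-rooting breadth-first broadcast over each grouped cluster is a broadcast-and-echo costing $O(|C|)$; both total $O(n)$. In the child--sibling transformation each node sends $O(1)$ messages per child, so the count is bounded by the number of parent--child edges in the grouped clusters, namely $O(n)$; building the virtual Euler tour costs $O(1)$ per node and yields $O(|C|)$ virtual nodes per grouped cluster, again $O(n)$ overall. Finally, \RCtT{} run on a cycle of $\ell=O(|C|)$ virtual nodes uses $O(\ell)$ messages by the analysis in \Cref{sec:tree-wft}; summing over the grouped clusters gives $O(n)$. Adding the constantly many steps yields $O(n)$ messages per phase.

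The step I expect to be the crux is \RCtT{}: a naive deterministic pointer-jumping conversion of a length-$\ell$ cycle into a bounded-depth tree costs $\Theta(\ell\log\ell)$ messages, which would inflate the per-phase bound to $O(n\log n)$ and the overall bound to $O(n\log^2 n)$. The $O(\ell)$-message guarantee therefore rests entirely on the randomized construction deferred to \Cref{sec:tree-wft}, and this lemma should be read as reducing the per-phase accounting to that single primitive. A secondary point worth stating carefully is that high-degree nodes---of degree up to $O(\Delta)$ arising from the accepting edges in $E_a$---never break the count: every message they send is charged either to a distinct incoming request or to a distinct tree edge, both of which are globally $O(n)$, so degree concentration affects only the node-wise bound and not the total.
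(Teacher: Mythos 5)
Your proof is correct and, despite being organized by a per-cluster/per-request charging scheme rather than the paper's local-channel/global-channel split, it rests on exactly the same facts: $\FindOut$ and broadcasts cost $O(|C|)$ per cluster and clusters partition $V$, each cluster issues one request so request-proportional costs sum to $O(n)$, child--sibling and Euler-tour steps cost $O(1)$ per node, and \RCtT{} costs $O(n)$ by the analysis in \Cref{rc2t}. Your closing remark correctly identifies the crux---the lemma indeed hinges on the randomized \RCtT{} primitive avoiding the $\Theta(\ell\log\ell)$ cost of deterministic pointer jumping---which matches the paper's reliance on \Cref{rc2t}.
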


    \begin{proof}
        First, we analyze the communication over the local channel. In each phase, local communication will only occur in the set of sampled outgoing edges $E_r$ whose size is bounded by $n$, \ie $|E_r|= |\CC(G_i)| \leq n$. Moreover, in each phase, there are only 3 steps where we make use of the local channel, namely, requesting, replying, and child-sibling tree transformation. In each of these steps, at most $O(1)$ messages will be sent in each edge in $E_r$. Therefore, there will be $O(n)$ messages sent in the local channel in each phase.
   
        To analyze the communication over the global channel, we first observe that constant rounds of broadcast-and-echo in \FindOut{} and other subroutines costs $O(n)$ messages. Next, it is shown in \Cref{rc2t} that the \RCtT{} subroutine which is run once in each phase, takes $O(n)$ messages. Lastly, the remaining tasks including child-sibling tree transformation and virtual Euler tour construction, take $O(1)$ messages per node. 
        
        In conclusion, $O(n)$ messages are sent in each phase.
    \end{proof}    
}

\paragraph{Node-wise message complexity:}{
    Node-wise message complexity refers to the maximum number of messages sent and received for a node over the whole execution of an algorithm. We claim that with high probability every node $v$ sends and receives at most $O(\deg_G(v) + \log n)$ messages in our protocol. 

\paragraph{Effective degree reduction:}
    To achieve low node-wise message complexity \whp, simply merging the cluster along sampled inter-cluster edges is not enough. We observe that we need to reduce the number of potential incoming requests to each node by a constant factor in each phase to reduce the node-wise message complexity. To achieve this, we further group and merge the \textit{requested but rejected} clusters of each node. Since each cluster can only send one request per phase, we have successfully reduced the ``effective degree'' of this node by reducing the number of potential requesting messages to this node in the future.

\paragraph{Intuition for analysis:}
    A node can send and receive messages from either the local network or the global network. The difficulty is to analyze the node-wise message complexity on the local network. 
    Intuitively, the number of local messages of a node is small due to two reasons. 
    First, if a request is accepted along an inter-cluster edge, the two clusters connected by this local edge will be merged and no more local communication is required for the remaining rounds of the algorithm's execution. 
    Second, local neighbors of a particular node $v$ can be merged into a cluster and hence there will be fewer possible requests sent from them since only one request can be sent from each cluster. 
    
    From these intuitions, we define the notion of an \emph{active neighboring cluster} of a node $v$. A cluster is said to be a neighbor of $v$ if there is a vertex in this cluster that is connected to $v$. We write $\mathcal{N}^i(v)$ to denote the set of all neighboring clusters of $v$ at the start of phase $i$. We call the cluster in $\mathcal{N}^i(v)$ containing $v$ the \textit{inactive} cluster and the other clusters the \textit{active} clusters, denoted as $\mathcal{N}^i_a(v)$. In the following lemma, replying messages refer to \textit{accepting messages} or \textit{rejecting messages} a node sent to \textit{merging requests}.
    
    \begin{lemma}
        \label{lm:node-wise-local}
        In the above protocol, any node $v$ sends a total $O(\deg(v)+\log n)$ replying messages over the local network before the algorithm terminates \whp.
    \end{lemma}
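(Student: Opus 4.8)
The plan is to bound the total number of replies $v$ sends by charging it against the decrease of a potential: the number of \emph{active neighboring clusters} of $v$. Let $a_i := |\mathcal{N}^i_a(v)|$ be the number of active neighboring clusters at the start of phase $i$, and let $R_i$ be the number of merging requests that $v$ receives in phase $i$. Since $v$ sends exactly one reply (an \emph{accepting} or \emph{rejecting} message) per request it receives, the quantity we must control is $\sum_i R_i$. First I would record two easy facts. Initially each neighbor of $v$ forms its own singleton cluster while $v$'s cluster is $\{v\}$, so $a_1 = \deg_G(v)$. Moreover the sequence $(a_i)$ is non-increasing: the neighbor set of $v$ is fixed and clusters only ever merge, so any cluster containing a neighbor of $v$ remains such a cluster, and the count $a_i$ can drop only when two such clusters coalesce or one is absorbed into $v$'s cluster.

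Next I would establish the key per-phase inequality $a_i - a_{i+1} \ge R_i/2 - 1/2$, equivalently $R_i \le 2(a_i - a_{i+1}) + 1$. Every one of the $R_i$ requesting clusters is an active neighboring cluster of $v$: its sampled outgoing edge has external endpoint $v$ and internal endpoint a neighbor of $v$, so it contains a neighbor of $v$ and is distinct from $v$'s own cluster; and distinct requests come from distinct clusters, since each cluster issues at most one request per phase. Split these requesters into the $A_i$ that $v$ accepts and the $r_i = R_i - A_i$ that $v$ rejects. Each accepted requester merges into $v$'s cluster and thereby becomes inactive, reducing the count by $A_i$. Each rejected requester is paired off by $v$'s local matching over its rejected neighbors, and every matched pair merges, so the $r_i$ rejected (neighboring) clusters become at most $\lceil r_i/2\rceil$ clusters, a reduction of $\lfloor r_i/2\rfloor$. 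These two groups are disjoint, so $a_{i+1} \le a_i - A_i - \lfloor r_i/2\rfloor$; combining with $A_i + \lfloor r_i/2\rfloor \ge R_i/2 - 1/2$ yields the claimed bound.

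Finally I would telescope over the $T$ phases of the run: $\sum_{i=1}^{T} R_i \le 2(a_1 - a_{T+1}) + T \le 2\deg_G(v) + T$, which holds deterministically for any execution. Invoking \Cref{lm:hybrid termindation} to get $T = O(\log n)$ \whp then gives $\sum_i R_i = O(\deg_G(v) + \log n)$ \whp, which is exactly the stated bound on the number of replying messages.

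The hard part is the per-phase inequality, and in particular the treatment of the rejected requesters. One must verify that the clusters matched by $v$ are themselves genuine active neighboring clusters tracked by $a_i$, that the matching really coalesces them into at most half as many neighboring clusters, and that their contribution to the decrease is disjoint from that of the accepted requesters. This is precisely where the \emph{effective degree reduction} via matching rejected clusters is essential: without it one would only obtain the weaker decrease $a_{i+1} \le a_i - A_i$, which over $O(\log n)$ phases degrades to a node-wise bound of $O(\deg_G(v)\log n)$ rather than the desired $O(\deg_G(v) + \log n)$.
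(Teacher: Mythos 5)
Your proposal is correct and takes essentially the same approach as the paper: both charge the replies of $v$ against the decrease in the number of active neighboring clusters $|\mathcal{N}^i_a(v)|$, using the fact that accepted requesters merge into $v$'s cluster and rejected requesters are pairwise matched, then sum over $O(\log n)$ phases. The only difference is bookkeeping---you fold the per-phase additive slack into a telescoping sum ($R_i \le 2(a_i - a_{i+1}) + 1$), whereas the paper separates trivial phases with $k_i \le 1$ and case-splits on the color of $v$---but the underlying charging argument is identical.
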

    \begin{proof}
        We assume that the inter-cluster edge sampling behavior is controlled by an oblivious adversary that does not know the random outcome of the color sampling in each cluster.

        Suppose at the start of the $i$-th phase, the set of active neighboring clusters of $v$ is $\mathcal{N}^i_a(v)$. Suppose further that $k_i$ active clusters send requests to $v$ in phase $i$, where $k_i \leq |\mathcal{N}^i_a(v)|$. 
        Since the algorithm terminates in $O(\log n)$ phases \whp by \Cref{lm:hybrid termindation}, and we are aiming for an $O(\deg(v)+\log n)$ term, we call any phases with $k_i\leq 1$ a trivial phase, and other phases non-trivial phases. Note that $v$ replies to a total of no more than $O(\log n)$ requests in all the trivial phases \whp. Now we will assume that $k_i > 1$ and try to bound the number of replying messages in non-trivial phases.
        
        Let $Y_i=|\mathcal{N}^i_a(v)|$ be the number of active neighboring clusters of $v$ at the start of phase $i$. 
        If $v$ is \Blue{}, all of the $k_i$ requesting clusters will be rejected. Due to the matching mechanism, the number of active neighboring clusters of $v$ will decrease by $\floor{\frac{k_i}{2}}$. 
        If $v$ is \Red{}, we will accept the \Blue{} requests and pair up the rejected requests. Suppose $p$-fraction of the requests are rejected and $(1-p)$-fraction of the requests are accepted, then the number of active neighboring clusters of $v$ will decrease by $\floor{\frac{p k_i}{2}} + (1-p)k_i\geq \floor{\frac{k_i}{2}}$. Since $k_i > 1$, in any non-trivial phase $i$, $Y_{i+1} \leq Y_i - \floor{\frac{k_i}{2}} \leq Y_i - \frac{k_i}{4}$.
        

        Suppose the algorithm terminates in $C\log n$ phases \whp, for some constant $C$. Define $T\subseteq [C\log n]$ as the set of trivial phases and $S=[C\log n]\backslash T$ the set of non-trivial phases. 
        For each $i\in S$, $Y_{i+1} \leq Y_{i} - \frac{1}{4}k_i$ implies that $k_i \leq 4(Y_{i} - Y_{i+1})$.
        The total number of replies sent by $v$ is as follows.
        \begin{align*}
            \sum_{i=1}^{C\log n} k_i 
            &= \sum_{i\in S} k_i + \sum_{i\in T} k_i\\
            &\leq \sum_{i \in S} 4(Y_{i} - Y_{i+1}) + O(\log n)\\
            &\leq \sum_{i \in [C\log n]} 4(Y_{i} - Y_{i+1}) + O(\log n) \tag{Since $Y_{i} \geq Y_{i+1}$}\\
            &\leq 4Y_0 +O(\log n) = O(\deg(v)+\log n). && \qedhere
        \end{align*}
    \end{proof}

    \begin{restatable}{lemma}{node}
        \label{lm:node-wise-message}
        In the above protocol, any node $v$ sends a total $O(\deg_G(v)+\log n)$ messages before the algorithm terminates \whp.
    \end{restatable}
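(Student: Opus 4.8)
The plan is to establish the per-node bound $O(\deg_G(v) + \log n)$ by accounting separately for messages sent and received over the local channel and over the global channel, and then combining these contributions. The key observation is that \Cref{lm:node-wise-local} already handles the most delicate part---the replying messages a node sends over the local channel---so the remaining work is to bound all the other message types, each of which I expect to be comparatively routine and to contribute either $O(\deg_G(v))$ or $O(\log n)$.

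First I would account for the local-channel messages that are \emph{not} covered by \Cref{lm:node-wise-local}. A node $v$ can send local messages only along its incident edges in $G$, and in each of the three steps that use the local channel (sending a merging request, receiving/sending accepting or rejecting messages, and the child-sibling tree transformation), at most $O(1)$ messages traverse each incident edge \emph{per phase}. The subtlety is that summed over $O(\log n)$ phases this naively gives $O(\deg_G(v)\log n)$, so I would invoke the same structural facts used in \Cref{lm:node-wise-local}: once a request across an incident edge is accepted, that edge becomes internal to a cluster and is never used again, and the effective-degree-reduction argument ensures the number of distinct active neighboring clusters that can contact $v$ shrinks geometrically. Concretely, \Cref{lm:node-wise-local} bounds the replies $v$ sends by $O(\deg_G(v)+\log n)$; by the same token the number of requests $v$ \emph{receives} equals the number it replies to, so this is also $O(\deg_G(v)+\log n)$. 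The requests $v$ itself \emph{sends} are at most one per phase (each cluster sends one request, and $v$ forwards at most a constant number as the designated sender), giving $O(\log n)$.

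Next I would bound the global-channel messages. The global channel is used in \FindOut{}, the various broadcast-and-echo subroutines, the regrouping-message exchange, and the \RCtT{} procedure. Here the global capacity of $O(\log n)$ per round is the natural throttle: over $O(\log^2 n)$ total rounds one could fear an $O(\log^3 n)$ bound, so instead I would argue amortized over phases. By \Cref{lm:hybrid-message-per-phase} each phase uses $O(n)$ global messages in total, but for the per-node count I would use that in each phase a node participates in $O(1)$ broadcast-and-echo operations and in \RCtT{}, each contributing $O(1)$ messages per node per phase, for a per-phase cost of $O(1)$ and hence $O(\log n)$ across all phases. The regrouping step likewise costs $O(1)$ global messages per node per phase.

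Combining, the total is $O(\deg_G(v)+\log n)$ from the local replies and incoming requests, plus $O(\log n)$ from the locally-sent requests and from all global-channel activity, yielding $O(\deg_G(v)+\log n)$ overall. The main obstacle I anticipate is the local-channel bookkeeping: ensuring that the $O(1)$-messages-per-edge-per-phase count does not blow up to $O(\deg_G(v)\log n)$ when summed over phases. This is precisely where the effective-degree-reduction invariant from \Cref{lm:node-wise-local} must be leveraged---one must argue that an incident edge contributes to local communication in only $O(1)$ phases in expectation (or that the geometric decay of active neighboring clusters absorbs the $\log n$ factor), rather than in every phase. I expect the cleanest route is to observe that every local message $v$ sends or receives is tied either to a reply (bounded by \Cref{lm:node-wise-local}) or to a request $v$ initiates (one per phase), so no independent summation over edges is actually needed.
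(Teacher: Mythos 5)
Your local-channel analysis coincides with the paper's: the replies are exactly what \Cref{lm:node-wise-local} bounds, the requests $v$ itself initiates number $O(\log n)$ (one per phase, by \Cref{lm:hybrid termindation}), and, as you correctly observe, every remaining local message (e.g., for the child-sibling transformation along accepted edges) is tied to one of these two categories, so no per-edge-per-phase summation is needed. The genuine gap is on the global channel. You assert that each broadcast-and-echo contributes $O(1)$ messages per node per phase. That is not justified, and it is false in the worst case: the invariant only guarantees that each cluster is a \satt{}, whose maximum degree is $O(\log n)$, and a node's cost in a broadcast-and-echo is proportional to its degree in that overlay tree, since it must forward the broadcast to each of its children and aggregate an echo from each of them. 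A single phase can therefore cost a node $\Theta(\log n)$ global messages, and summing over the $O(\log n)$ phases your decomposition only yields $O(\deg_G(v)+\log^2 n)$, not the claimed bound.

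The paper closes exactly this hole with an amortized, probabilistic argument that your proposal is missing: a node's overlay-tree degree is created only in \RCtT{} (a node gains at most a constant number of children in each coin-flip iteration in which it is active), and across all $O(\log n)$ runs of \RCtT{} a fixed node is active in only $O(\log n)$ coin-flip iterations \whp, because the total number of active iterations is a negative binomial random variable with parameters $p=1/8$ and $r=\Theta(\log n)$ (see \Cref{rc2t}). Consequently the \emph{sum over all phases} of a node's overlay-tree degree is $O(\log n)$ \whp, and only then does the total per-node cost of all broadcast-and-echo calls and \RCtT{} executions come out to $O(\log n)$. You would need to add this argument (or an equivalent amortization of overlay degree across phases) for your global-channel bound to go through.
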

    \begin{proof}[Proof of \Cref{lm:node-wise-message}]

        First, we analyze the communication over the global channel. Each node only participates in $O(1)$ calls of broadcast-and-echo in each phase. Since the node-wise message complexity of broadcast-and-echo is proportional to its degree in a phase, we first analyze a node's degree in a phase. Although in each \satt{}, each node can have $O(\log n)$ degree in the worst case, we now show that the sum of the nodes degree across $O(\log n)$ phases is at most $O(\log n)$. It is shown in \Cref{rc2t} that each node is active for $O(\log n)$ coin flip iterations during $O(\log n)$ runs of \RCtT{}. Since participating in each coin flip iteration contribute to constant number of messages and constant number of child, we can conclude that each node sends a total $O(\log n)$ messages (for both construction of \RCtT{} messages sent for broadcast-and-echo, which is proportional to the sum of degrees) across $O(\log n)$ phases \whp. 
        
        Next, we analyze the communication in the local channel. We only need to consider two cases: (1) $v$ sends a request and is rejected, and (2) $v$ replies (either reject or accept) other requests sent to it via active local edges. The first case can happen $O(\log n)$ times since the algorithm terminates in $O(\log n)$ phases \whp according to \Cref{lm:hybrid termindation}, giving rise to $O(\log n)$ messages, which is dominated by the  $O(\log n)$ term. 

        The number of messages $v$ sends in case (2) is characterized by \Cref{lm:node-wise-local}. Hence we can arrive at our conclusion.
    \end{proof}

    Now we are ready to prove \Cref{thm:hybridwft}.

    \hybridwft*
    \begin{proof}
       The theorem follows from \Cref{lm:hybrid termindation}, \Cref{lm:hybrid-message-per-phase}, and \Cref{lm:node-wise-message}.
    \end{proof}
}




\section*{Acknowledgments}
We are grateful to an anonymous FSTTCS reviewer for suggesting the use of the hash functions of~\cite{king2015construction} to reduce the number of bits required for identifying an outgoing edge. In an earlier version of this work, this task had been carried out using the linear sketches of~\cite{ahn2012analyzing}. The incorporation of the reviewer’s suggestion results in an $O(\log n)$-factor improvement in the overall communication complexity.

\printbibliography


\appendix

\section{Tradeoffs between complexity measures}\label{app:model}

{
There is a tradeoff among round complexity, message complexity, and balanced communication: If each node in $G$ is only allowed to send $O(1)$ messages per round, then any algorithm to build a constant-degree overlay $H$ requires $\Omega(\Delta(G))$ rounds~\cite{angluin2005fast}. To achieve both low round complexity and low message complexity independent of the initial degree of the input graph, certain nodes may need to send many more messages than others. This is observed in our first protocol in the \GOSSIPr{} model in \Cref{sec:star}. Although it achieves good round and message complexities, it suffers from high regional communication where some nodes may need to communicate up to $\Omega(n)$ messages. 

\paragraph{Node-wise message complexity:}
Node-wise message complexity can be seen as a measure that quantifies the aforementioned imbalance. 
The study of node-wise complexity is further motivated by the pursuit of fairness in the P2P network context. An extensive body of work is devoted to designing fair mechanisms to encourage users to contribute to the P2P network~\cite{anceaume2005incentives, lakhani2022fair}. However, besides offering incentives, it is essential to guarantee that users will incur low and fair costs when they join the network. A node may be discouraged from joining the network if it may potentially be selected as a crucial node of the network and perform much more work than other participating nodes.
}

\paragraph{Communication capacity versus round complexity:}{
    Communication capacity refers to the number of messages sent per node per round. This parameter captures the congestion in real-world networks. Intuitively, algorithms designed with more stringent communication capacity can perform better in real-world networks under congestion. There has been a series of works on improving the round complexity of the overlay network construction problem with more stringent communication capacity. Specifically, \citeauthor{angluin2005fast}~\cite{angluin2005fast} asked if there is an $O(\log n)$-round algorithm with $O(\Delta)$ communication capacity. While \citeauthor{gotte2023time}~\cite{gotte2023time} answered this question affirmatively for the case that the communication capacity is $O\left(\Delta+\log^2 n\right)$, in this work we present another tradeoff with $O\left(\log^2 n\right)$ round complexity and $O(\Delta+\log n)$ communication capacity, see \Cref{thm:hybridwft} and \Cref{cor:hybridwf}. 
}


\section{Expander construction in constant degree network}\label{sec:expander}


\wfttoexp*
\begin{proof}
    Firstly, we will apply the \CExp{} procedure from \cite{gotte2023time} on the global network which uses $O(\log n)$ rounds and $O(n\log^2 n)$ messages to convert a constant degree network $G$ into an $O(\log n)$-regular expander network $H$ with conductance $\phi \in \left(0, \frac{1}{2}\right]$ \whp. Note that the round complexity in our $\HYBRID(\log n, 1, \log n)$ model follows since we started with a constant degree graph and we allow $O(\log n)$ messages to be sent. This aligns with the parameters in \cite{gotte2023time}. The node-wise message complexity and the message complexity follow by multiplying a $\log n$ factor since each node is allowed to send $O(\log n)$ messages per round.

    Then, we can apply the \ExpDR{} procedure on the global network from \cite{dufoulon2024time} that can convert any $O(\log n)$-regular expander graph $H$ with constant conductance into a bounded-degree expander graph $H_1$ with conductance $\Phi \in \left(0,{1}/{10}\right]$ \whp. 
    
    Roughly speaking, in the \ExpDR{} procedure, each node generates $c=O(1)$ active tokens. Then in each of the $O(\log n)$ phases, each active token performs $O(\log n)$ steps of random walks. At the end of a phase, if a token ends in a node with less than $10c$ tokens, then the token stays in that node, becomes inactive, and informs the source about the inactivity and the destination of this token. Otherwise, the token will perform another $O(\log n)$-step random walk in the next phase. Once all tokens become inactive, all nodes establish a link with every other node holding its inactive tokens. See \cite{dufoulon2024time} for the proof.

    Since each node started with $c = O(1)$ tokens at the start and the graph is regular, the expected number of tokens at each node $v$ at each step of the random walk is also $c = O(1)$ throughout the entire walk. By Chernoff bound, we know that each node will receive at most $O\left(\frac{\log n}{\log \log n}\right)$ tokens in each round \whp. Therefore, the $O(\log^2 n)$ steps of random walk for each token can be done in $O(\log^2 n)$ rounds. The node-wise complexity follows by multiplying the number of rounds by the load \whp. The algorithm only uses $O(n\log^2 n)$ messages since we only generate $c n$ tokens and each token performs at most $O(\log^2 n)$ steps of random walk.  
\end{proof}

\section{Message complexities in prior work}\label{sec:nodewiseprior}

In this section, we analyze the overall and node-wise message complexities of the two protocols in \cite{gotte2023time}.

\paragraph{Algorithm in \PCONGEST{}:}
Recall that in the \PCONGEST{} model each node $v$ can send and receive at most $\deg(v)\log n$ messages. The protocol consists of $O(\log n)$ iterations. In each iteration, each node $v$ creating $O\left(\deg_G(v)\log n\right)$ tokens that do constant-length random walk. 
Since the expected degree-normalized load for each node never increases, by Chernoff bound, each node receives at most $O\left(\deg_G(v)\cdot \frac{\log^2 n}{\log \log n}\right)$ tokens in each round. Hence, the node-wise message complexity is $O\left(\deg_G(v)\cdot \frac{\log^3 n}{\log \log n}\right)$.
The total message complexity is $\Theta(m\log^2 n)$, since in each phase, each node generates $O(\log n)$ constant-length random walk for each of its neighbors.

\paragraph{Algorithm in $\mbox{\HYBRID}(\log n, 1, \log^2 n)$:}

The algorithm first transforms the input graph into an $O(\log n)$-degree graph via the local network and performs their expander creation algorithm described in the previous paragraph in the global network, which is equivalent to the $\NCC_0$ or the \PCONGEST{} model with maximum degree $O(\log n)$. Each node $v$ uses $O(\deg_G(v))$ messages in the transformation step and at most $O\left(\log n \cdot \frac{\log^3 n}{\log \log n}\right)=O\left(\frac{\log^4 n}{\log \log n}\right)$ messages in the second step.
The total message complexity is $\Omega(m + n\log^3 n)$, since it takes $\Omega(m)$ messages for the first step and the transformed graph has $\Theta(n\log n)$ edges.

\section{Converting a rooted tree into a \wft{}}
\label{sec:tree-wft}
In this section, we provide the details of converting the rooted tree overlay of $n$ nodes generated by the grouping step of our \hybridWFT{} protocol into a \wft{} in \Cref{subsect:algo}.

We first describe a deterministic protocol to convert a rooted tree into a \wft{}, since that is the target topology at the end of the algorithm. This method is relatively folklore and a similar description can be found in \cite{gmyr2017distributed}. Additionally, we describe a modification of this method with randomness to improve message complexity. This is applied after the broadcasting-style re-rooting step to maintain the invariant that each cluster is a \wft{}.

\subsection{Deterministically converting a rooted tree into a \wft{}}\label{sec:detwft}

\paragraph{Goal:} We will demonstrate a {deterministic} protocol in the $\HYBRID(\log n, 1, \log n)$ model that constructs a well-formed tree overlay from the rooted tree overlay of $n$ nodes generated by the grouping step of our \hybridWFT{} protocol described in \Cref{subsect:algo}. The protocol costs $O(\log n)$ rounds and $O(n \log n)$ messages. Moreover, each node $v$ sends and receives at most $O(\deg_G(v) + \log n)$ messages. We first describe the protocol in three steps and then show an analysis. 

    \paragraph{Child-sibling tree transformation:}{
        Each node will compute an ordering of its children locally. For example, suppose node $v$ has children $u_1, u_2, \dots, u_k$ in order, where $k$ is the number of children that $v$ has in the rooted tree.\footnote{Note that $k\leq \max(\deg_G(v), c)$, where $\deg_G(v)$ denotes the degree of $v$ in the local graph and $c$ is the constant upper bound for the degree of the \wft{} from the previous iteration.}
        Then to each node $u_i\in \{u_0,\dots, u_k\}$, $v$ will send the identifier of $u_{i-1}$, with the convention that $u_0 := v$, to indicate the new ``parent/sibling'' of $u_i$. This step is illustrated pictorially in \Cref{fig:child_sib}. 
        Now, if we view each incoming arrow as a parent, each node now has at most $2$ children and $1$ parent.  
        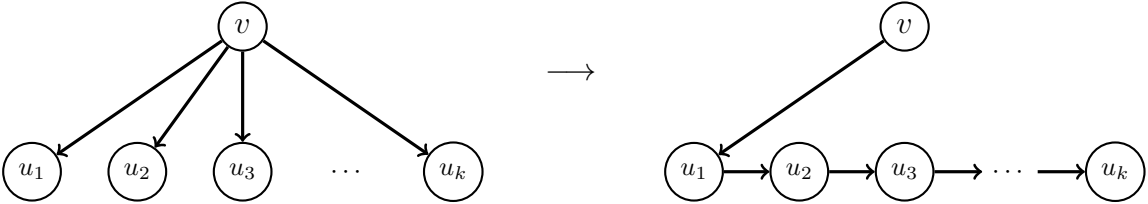
\begin{figure}[htbp]
    \centering
    \resizebox{\linewidth}{!}{
    \begin{tikzpicture}[node distance={14mm}, font=\small, main/.style = {draw, thick, circle}]
        \node[main] (1) {$u_1$};
        \node[main] (2) [right of=1] {$u_2$};
        \node[main] (3) [right of=2] {$u_3$};
        \node (4) [right of=3] {$\cdots$};
        \node[main] (5) [right of=4] {$u_k$};
        \node[main, font=\large] (6) [above=12mm of 3] {$v$};

        \node[font=\large] (A) [above right=8mm and 8mm of 5] {$\longrightarrow$};
        
        \node[main] (11) [right=24mm of 5] {$u_1$};
        \node[main] (12) [right of=11] {$u_2$};
        \node[main] (13) [right of=12] {$u_3$};
        \node (14) [right of=13] {$\cdots$};
        \node[main] (15) [right of=14] {$u_k$};
        \node[main, font=\large] (16) [above=12mm of 13] {$v$};

        \draw[very thick, ->] (6)--(1);
        \draw[very thick, ->] (6)--(2);
        \draw[very thick, ->] (6)--(3);
        \draw[very thick, ->] (6)--(5);
        \draw[very thick, ->] (16)--(11);
        \draw[very thick, ->] (11)--(12);
        \draw[very thick, ->] (12)--(13);
        \draw[very thick, ->] (13)--(14);
        \draw[very thick, ->] (14)--(15);
    \end{tikzpicture}
    }
    \caption{An illustration of the child-sibling tree transformation at node $v$}
    \label{fig:child_sib}
\end{figure}
    }

    \paragraph{Euler tour technique:}{
        Each node $v$ internally simulates $k$ virtual nodes, $v_1, \dots, v_k$, where $k\leq 3$. Then $v$ informs its $k$ neighbors of their $2$ adjacent virtual nodes. The adjacency of the virtual nodes is consistent with an in-order traversal of the rooted tree. Since each node is of degree $3$ after the child-sibling tree transformation, the in-order traversal will visit each node at most $3$ times, thus creating at most $3$ virtual nodes.
        This step is illustrated in \Cref{fig:euler}. After we complete this step, we will have a cycle of virtual nodes where each node has at most $3$ virtual copies in this cycle.
        \begin{figure}[htbp]
    \centering
    \begin{tikzpicture}[node distance={18mm}, font=\footnotesize, main/.style = {draw, thick, circle, inner sep=0.6mm}]
        \node[font=\large, main,inner sep=1.2mm] (1) {$v$};
        \node[font=\large, main,inner sep=1.2mm] (2) [above right of=1] {$u$};
        \node[font=\large, main,inner sep=1.2mm] (3) [below right of=1] {$w$};
        \node[font=\large, main,inner sep=1.2mm] (4) [below left of=1] {$x$};

        \node[font=\large] (A) [right=30mm of 1] {$\longrightarrow$};
        
        \node[main] (11) [right=26mm of A] {$v_1$};
        \node[main] (112) [right=3.4mm of 11] {$v_2$};
        \node[main] (113) [below right=0.8mm and 0.2mm of 11] {$v_3$};
        \node[main] (12) [above right of=11] {$u_1$};
        \node[main] (122) [right=3.4mm of 12] {$u_2$};
        \node[main] (123) [below right=0.8mm and 0.2mm of 12] {$u_3$};
        
        \node[main] (13) [below right of=11] {$w_1$};
        \node[main] (132) [right=3.4mm of 13] {$w_2$};
        \node[main] (133) [below right=0.8mm and 0.2mm of 13] {$w_3$};
        
        \node[main] (14) [below left of=11] {$x_1$};
        \node[main] (142) [right=3.4mm of 14] {$x_2$};
        \node[main] (143) [below right=0.8mm and 0.2mm of 14] {$x_3$};

        \draw[very thick, ->] (1)--(3);
        \draw[very thick, ->] (1)--(4);
        \draw[very thick, ->] (2)--(1);

        \draw[thick, <->] (12)--(11);
        \draw[thick, <->] (11)--(14);
        \draw[thick, <->] (142)--(113);
        \draw[thick, <->] (113)--(13);
        \draw[thick, <->] (132)--(112);
        \draw[thick, <->] (112)--(123);
    \end{tikzpicture}
    \caption{An illustration of the Euler Tour simulation at node $v$ (other parts of the graph omitted)}
    \label{fig:euler}
\end{figure}
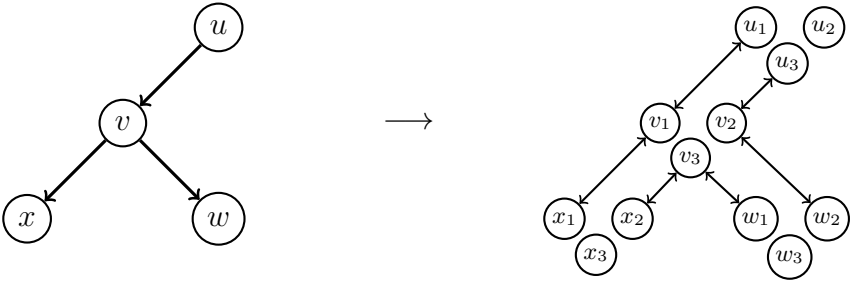

    }

    \paragraph{Deterministic pointer jumping:}{
        Each node will perform the following for $\log K$ iteration, where $K$ is the number of virtual nodes (this number can be determined exactly by performing a broadcast and converge-cast from one of the virtual nodes of the root $r$). We denote $N_i(v)$ to be the neighborhood $v$ learned in iteration $i$, with $N_0(v)$ being the initial Euler Tour neighborhood. In iteration $t$, each virtual node will introduce its two neighbors in $N_{t-1}(v)$ to each other. Thus, in iteration $t$, each virtual node will learn about the two virtual nodes that are of distance $2^{t}$. Then, we can start building the well-formed tree. One of the virtual nodes of root $r$ will start broadcasting along its latest learned neighbors $N_{\floor{\log K}-1}(r)$. Then, in iteration $t$, each node $v$ will broadcast along neighbour $N_{\max(\floor{\log K}-1-t, 0)}(v)$. It is easy to show that this process covers all the virtual nodes. After this process is done, we keep only these broadcasting edges. We can see that these broadcasting edges form a binary tree of depth $O(\log K) = O(\log n)$. 

        \def\rh{11}
\def\rd{5.8}
\def\hh{13}

\begin{figure}[htbp]
    \centering
    \begin{tikzpicture}[node distance={\rh mm}, main/.style = {draw, circle,inner sep=0.8mm}]
        \node[] (0) {};
        \node[main] (1) [right of=0] {};
        \node[main] (2) [above right=\rd mm and \rd mm of 0] {};
        \node[main] (3) [above of=0] {};
        \node[main] (4) [above left=\rd mm and \rd mm of 0] {};
        \node[main] (5) [left of=0] {};
        \node[main] (6) [below left=\rd mm and \rd mm of 0] {};
        \node[main] (7) [below of=0] {};
        \node[main] (8) [below right=\rd mm and \rd mm of 0] {};

        \draw[very thick] (1) to [out=90,in=315] (2);
        \draw[very thick] (3) to [out=0,in=135] (2);
        \draw[very thick] (3) to [out=180,in=45] (4);
        \draw[very thick] (5) to [out=90,in=225] (4);
        \draw[very thick] (5) to [out=270,in=135] (6);
        \draw[very thick] (7) to [out=180,in=315] (6);
        \draw[very thick] (7) to [out=0,in=225] (8);
        \draw[very thick] (1) to [out=270,in=45] (8);

        \node[font=\large] (A) [right=\hh mm of 0] {$\longrightarrow$};

        \node[] [right=\hh mm of A] (10) {};
        \node[main] (11) [right of=10] {};
        \node[main] (12) [above right=\rd mm and \rd mm of 10] {};
        \node[main] (13) [above of=10] {};
        \node[main] (14) [above left=\rd mm and \rd mm of 10] {};
        \node[main] (15) [left of=10] {};
        \node[main] (16) [below left=\rd mm and \rd mm of 10] {};
        \node[main] (17) [below of=10] {};
        \node[main] (18) [below right=\rd mm and \rd mm of 10] {};

        \draw[] (11) to [out=90,in=315] (12);
        \draw[] (13) to [out=0,in=135] (12);
        \draw[] (13) to [out=180,in=45] (14);
        \draw[] (15) to [out=90,in=225] (14);
        \draw[] (15) to [out=270,in=135] (16);
        \draw[] (17) to [out=180,in=315] (16);
        \draw[] (17) to [out=0,in=225] (18);
        \draw[] (11) to [out=270,in=45] (18);
        \draw[very thick] (11)--(13);
        \draw[very thick] (13)--(15);
        \draw[very thick] (15)--(17);
        \draw[very thick] (17)--(11);
        \draw[very thick] (12)--(14);
        \draw[very thick] (14)--(16);
        \draw[very thick] (16)--(18);
        \draw[very thick] (12)--(18);

        \node[font=\large] (B) [right=\hh mm of 10] {$\longrightarrow$};

        \node[] [right=\hh mm of B] (20) {};
        \node[main] (21) [right of=20] {};
        \node[main] (22) [above right=\rd mm and \rd mm of 20] {};
        \node[main] (23) [above of=20] {};
        \node[main] (24) [above left=\rd mm and \rd mm of 20] {};
        \node[main] (25) [left of=20] {};
        \node[main] (26) [below left=\rd mm and \rd mm of 20] {};
        \node[main] (27) [below of=20] {};
        \node[main] (28) [below right=\rd mm and \rd mm of 20] {};

        \draw[] (21) to [out=90,in=315] (22);
        \draw[] (23) to [out=0,in=135] (22);
        \draw[] (23) to [out=180,in=45] (24);
        \draw[] (25) to [out=90,in=225] (24);
        \draw[] (25) to [out=270,in=135] (26);
        \draw[] (27) to [out=180,in=315] (26);
        \draw[] (27) to [out=0,in=225] (28);
        \draw[] (21) to [out=270,in=45] (28);
        \draw[] (21)--(23);
        \draw[] (23)--(25);
        \draw[] (25)--(27);
        \draw[] (27)--(21);
        \draw[] (22)--(24);
        \draw[] (24)--(26);
        \draw[] (26)--(28);
        \draw[] (22)--(28);
        \draw[very thick] (21) to [out=170,in=10] (25);
        \draw[very thick] (22) to [out=235,in=35] (26);
        \draw[very thick] (23) to [out=280,in=80] (27);
        \draw[very thick] (24) to [out=325,in=125] (28);
        \draw[very thick] (21) to [out=190,in=350] (25);
        \draw[very thick] (22) to [out=215,in=55] (26);
        \draw[very thick] (23) to [out=260,in=100] (27);
        \draw[very thick] (24) to [out=305,in=145] (28);

        \node[font=\large] (C) [right=\hh mm of 20] {$\longrightarrow$};

        \node[] [right=\hh mm of C] (30) {};
        \node[main] (31) [right of=30] {};
        \node[main] (32) [above right=\rd mm and \rd mm of 30] {};
        \node[main] (33) [above of=30] {};
        \node[main] (34) [above left=\rd mm and \rd mm of 30] {};
        \node[main] (35) [left of=30] {};
        \node[main] (36) [below left=\rd mm and \rd mm of 30] {};
        \node[main] (37) [below of=30] {};
        \node[main] (38) [below right=\rd mm and \rd mm of 30] {};

        \draw[] (33) to [out=0,in=135] (32);
        \draw[] (33) to [out=180,in=45] (34);
        \draw[] (37) to [out=180,in=315] (36);
        \draw[] (37) to [out=0,in=225] (38);
        \draw[] (33)--(35);
        \draw[] (35)--(37);
        \draw[] (31) to (35);

    \end{tikzpicture}
    \caption{An illustration of the pointer jumping step with $8$ virtual nodes}
    \label{fig:pointer}
\end{figure}
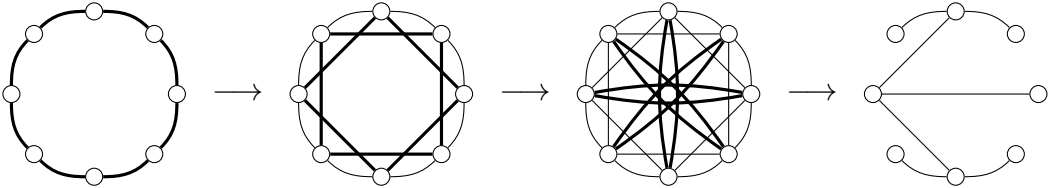
    }

    \paragraph{Post-processing:} After the previous step, we get a binary tree on the virtual nodes. Finally, we merge the virtual nodes and we get a graph with degree at most $6$ and diameter at most $O(\log n)$, since merging does not increase the diameter. Since this resultant may not be a tree, we will perform a breadth-first search from the root $r$ to finish our construction of the \wft{}.

    \paragraph{Analysis:}{
        First, we analyze the round complexity. The child-sibling tree transformation only takes $1$ round since each node is only adjacent to a constant number of children over the global network due to the maintenance of the invariant in the previous phrase and each node is allowed to send $1$ message to each of its local neighbors. It is also clear that the Euler tour simulation step only takes $O(1)$ rounds since each node only needs to share a constant number of messages with each of its constant number of neighbors. The pointer jumping step takes $O(\log n)$ rounds and the breadth-first search style broadcasting takes $O(\log n)$ steps.

        The child-sibling tree transformation takes at most $O(n)$ messages since each edge is used once and the graph before the transformation is a tree. The Euler Tour simulation also takes at most $O(n)$ messages since each node only sends a constant number of messages to each node in its neighborhood, which is bounded by $3$.
        Each node sends at most $O(\log n)$ messages for the pointer jumping steps since each node sends $O(1)$ messages per iteration for $O(\log n)$ iterations. Therefore, the above three steps use at most $O(n\log n)$ messages.

        Lastly, each node $v$ sends at most $O(\deg_G(v))$ messages over the local edges (for the child-sibling transformation step) and at most $O(\log n)$ messages over the global edges. Hence, the above steps have a node-wise complexity of $O(\deg_G(v)+\log n)$.}

\subsection{Message-efficient adaptation }
\label{rc2t}
\paragraph{Goal:} We will demonstrate a {randomized} protocol in the $\HYBRID(\log n, 1, \log n)$ model that constructs a \emph{\satt{}{}} overlay from the rooted tree overlay of $n$ nodes generated by the grouping step of our \hybridWFT{} protocol described in \Cref{subsect:algo}. 
We show that running $O(\log n)$ repetitions of this protocol costs $O(\log^2 n)$ rounds and $O(n \log n)$ messages \whp. Moreover, each node $v$ sends and receives at most $O(\deg_G(v) + \log n)$ messages over $O(\log n)$ repetitions.  

The first two steps, Child-sibling tree transformation and Euler tour transformation, are identical to \Cref{sec:detwft}. We now describe a randomized protocol to turn a cycle into a \satt{} with $O(\log n)$ rounds and $O(n)$ messages \whp. This step replaces the deterministic pointer jumping step.

\paragraph{Randomized \satt{}{} construction} We call this process \RCtT for ``randomized cycle to tree''. We start with a cycle of virtual nodes. Each node repeatedly performs the following steps:

\begin{enumerate}
    \item Each active node flip a fair coin and sends its coin value along with the IDs of its current two neighbors to its two neighbors.
    \item Each node will decide to become inactive if it is head and both of its neighbors are tail. In this case, it will send a message to both neighbors to declare its decision to be inactive, with a message to an arbitrary tail neighbor that it decides to be its child in the tree. Otherwise, the node remains active.
\end{enumerate}

Since each active node introduces its current neighbors to each other, and each node that decides to become inactive is adjacent to two tails who will remain active in the next iteration, each active neighbor in the next iteration will know its new neighbors. This process stops when there is only one node left, who will become the root of the tree. This process can be seen as a bottom-up version of pointer jumping, with randomness to deciding the next level of nodes in the tree. 

\begin{lemma}
\label{lm:iter}
    In a cycle of $n$ nodes, the above process terminates in $O(\log n )$ iterations \whp.
\end{lemma}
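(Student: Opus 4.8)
The plan is to reuse, almost verbatim, the supermartingale argument already employed for \Cref{lm:termination}. Let $N_t$ denote the number of active nodes at the start of iteration $t$, so that $N_0 = n$ and the process halts precisely when $N_t = 1$. The first thing I would record is a structural invariant: a node becomes inactive only if it flips heads while both of its two cycle-neighbors flip tails, so no two adjacent nodes can go inactive in the same iteration. Hence the set of nodes that become inactive in an iteration is an independent set of the current cycle, and splicing together the two active neighbors of each removed node — which is exactly what the neighbor-introduction rule accomplishes — leaves the remaining active nodes forming a strictly shorter cycle. In particular $N_{t+1} \ge \lceil N_t/2\rceil \ge 1$, so the recursion is well-defined, $(N_t)$ is non-increasing, and the process can never collapse past a single node.

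The heart of the argument is a constant-factor expected shrinkage, namely $\EE[N_{t+1}\mid N_t] \le \tfrac{7}{8} N_t$ whenever $N_t \ge 2$. The key point is that the coins flipped in iteration $t$ are fresh fair bits, independent of the entire configuration at the start of that iteration; thus, conditioned on the current cycle, a fixed node $v$ becomes inactive exactly when its own coin is heads and the coins of its two distinct neighbors are both tails, an event of probability $\tfrac12\cdot\tfrac12\cdot\tfrac12 = \tfrac18$. By linearity of expectation the expected number of nodes removed is $N_t/8$ for any cycle of length at least $3$, giving $\EE[N_{t+1}\mid N_t] = \tfrac{7}{8} N_t$; a direct check of the degenerate length-$2$ case (where the two removal events are disjoint, so at most one node departs, with probability $\tfrac12$) yields the even better factor $\tfrac34$. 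Consequently the bound $\EE[N_{t+1}\mid N_t] \le \tfrac{7}{8}N_t$ holds for all $N_t \ge 2$.

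From here the conclusion follows exactly as in \Cref{lm:termination}. Setting $Y_t := N_t - 1$, the shrinkage bound rewrites as $\EE[Y_{t+1}\mid Y_t] \le \tfrac{7}{8}(Y_t+1) - 1 \le \tfrac{7}{8} Y_t$ when $Y_t \ge 1$, while $Y_{t+1} = 0$ trivially once $Y_t = 0$; taking expectations and iterating gives $\EE[Y_t] \le \left(\tfrac{7}{8}\right)^t (n-1)$. Choosing $t = c\log n$ with $c$ a sufficiently large constant forces $\EE[Y_t] \le 1/\poly(n)$, and since $Y_t$ is a non-negative integer, Markov's inequality yields $\Pr[Y_t \ge 1] \le 1/\poly(n)$, i.e.\ $N_t = 1$ \whp. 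I expect the only real subtlety to be justifying the per-iteration shrinkage uniformly over every reachable cycle configuration: this rests entirely on the independence of the fresh coins from the history together with the independent-set observation that preserves the ``remaining graph is a shorter cycle'' invariant (including the small-cycle boundary cases), after which the Markov wrap-up is routine.
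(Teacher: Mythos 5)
Your proposal is correct and follows essentially the same route as the paper: each active node goes inactive with probability $\tfrac18$, linearity of expectation gives the $\tfrac78$ expected shrinkage per iteration, and the conclusion follows by the same Markov-inequality argument used for \Cref{lm:termination}. Your additional observations (the inactive nodes form an independent set so the survivors remain a cycle, and the degenerate length-$2$ case) are rigor the paper leaves implicit, not a different argument.
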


\begin{proof}
    Suppose before the $i$-th iteration there are currently $n_i$ active nodes. 
    For each active node $v$, let $X_v$ be the indicator random variable for the event that $v$ becomes inactive after the $i$-th iteration. Since $\Pr[X_v] = \frac{1}{8}$ and $\EE[\sum_{v \text{ is active}} X_v] =  \EE[X_v] n_i = \frac{n_i}{8}$, after each iteration there are in expectation $\frac{7}{8}$ fractions of current active nodes remain active. Hence, the process terminates in $O(\log n)$ iterations \whp by similar arguments to \Cref{lm:termination}.
\end{proof}

\paragraph{Correctness:} With \Cref{lm:iter}, and with the observation that each active node gains at most two children in each iteration, we know that the tree in the end has $O(\log n)$ degree and $O(\log n)$ depth \whp. This would be enough in our application for broadcast, since each node in $\HYBRID(\log n, 1, \log n)$ can communicate with $O(\log n)$ global neighbors in one round. 

\yanyu{this is enough but see if want to analyze for constant degree}

\paragraph{Total Message:} Suppose there are $n_i$ active nodes in round $i$. Then there are at most $4 n_i$ messages being sent. Then conditioned on the process terminating at iteration $t=c\log n$ for some constant $c$, the total number of messages sent is $\sum_{i=1}^{t} 4 n_i = O(n)$. The last step follows if each $n_{i+1}$ is at most a constant fraction of $n_i$ \whp. We know that $\EE[n_{i+1}] = \frac{7 n_i}{8}$. Since changing each coin flip can at most affect the active status of itself and its two neighbors, we can apply the McDiarmid inequality with the bounded difference parameter as $3$. We have
\begin{align*}
    \Pr\left[n_{i+1} \geq \frac{64}{63} \EE[n_{i+1}] \mid n_i  \right] &\leq \exp\left(-\frac{2 (\frac{64}{63} \EE[n_{i+1}])^2 }{ 9n_i}  \right)\\
    \Pr\left[n_{i+1} \geq \frac{8}{9} n_i \mid n_i  \right] &\leq \exp\left(-\frac{2 (\frac{8}{9} n_i)^2 }{ 9n_i}  \right)\\
     &\leq \exp\left(-\frac{n_i }{ 10}  \right).
\end{align*}
Hence, when $n_j \geq 10\log n$, the number of active nodes reduces by a factor of at least  $\frac{1}{9}$ \whp. When $n_j <10\log n$, since the number of active iterations is $O(\log n)$ \whp, we get at most $O(\log^2 n)$ messages after the $j$-th round. Hence, overall we have $O(n)$ messages \whp.


\paragraph{Node-wise message complexity for $O(\log n)$ repetitions of \RCtT{}:} \label{pa:nodewiseRCtT}
Now we analyze from each node's perspective: It can participate in at most $O(\log n)$ runs of \RCtT{} each having $O(\log n)$ iterations of coin flips \whp. A direct analysis will give us total $O(\log^2 n)$ messages since each coin flip iteration corresponds to $O(1)$ messages from each node. Because in each iteration, an active node becomes inactive with probability $\frac{1}{8}$, the number of active iterations of a node in each run of \RCtT{} follows a geometric random variable with probability $\frac{1}{8}$ and the total number of active iterations in $O(\log n)$ runs of \RCtT{} follows a negative binomial random variable with parameters $p=\frac{1}{8}$ and $r = c\log n$ for some constant $c$. Hence, we have that any fixed node spends at most $16c\log n$ active iterations in $O(\log n)$ repetitions of \RCtT{} \whp.


\end{document}